\pgfplotsset{compat=newest}
\newtheorem{theorem}{Theorem}[section]
\newtheorem{lemma}[theorem]{Lemma}
\newtheorem{proposition}[theorem]{Proposition}
\newtheorem{corollary}[theorem]{Corollary}
\newcommand{\E}{\mathop{\mathbb{E}}}
\newcommand{\e}{{\mathbf{e}}}
\newcommand{\SkewTranspose}{{\mathrm{SkewTranspose}}}
\newcommand{\NaiveSkewTranspose}{{\mathrm{NaiveSkewTranspose}}}
\newcommand{\SquareSort}{{\mathrm{SquareSort}}}
\title{SquareSort: a cache-oblivious sorting algorithm}
\author[1]{Michal Kouck{\'{y}}\thanks{Email: koucky@iuuk.mff.cuni.cz. Partially supported by the project of Czech Science Foundation no. 19-27871X and 24-10306S.}}
\author{Josef Matějka\thanks{Email: josef.matejka@tuta.io.}}
\affil[1]{Computer Science Institute of Charles University,\authorcr 
Faculty of Mathematics and Physics,\authorcr
Charles University,\authorcr
Malostransk{\'e}  n{\'a}m\v{e}st\'{\i} 25,\authorcr 
118 00 Praha 1, Czech Republic}
\date{}
\begin{document}
\clearpage
\maketitle

\begin{abstract}
In this paper we consider sorting in the cache-oblivious model of~\cite{frigo2012}. 
We introduce a new simple sorting algorithm in that model which has asymptotically optimal IO complexity $O(\frac{n}{B} \log_{M/B} n)$,
where $n$ is the instance size, $M$ size of the cache and $B$ size of a memory block.
This is the same as the complexity of the best known cache-oblivious sorting algorithm FunnelSort.
\end{abstract}

\thispagestyle{empty}
\newpage
\setcounter{page}{1}

\section{Introduction}

In this paper we consider sorting in the context of cache-oblivious analysis. 
Sorting is perhaps the most fundamental algorithmic problem and we know of dozens of algorithms for it: QuickSort, MergeSort, HeapSort, BubbleSort, ShellSort, \dots
In the context of cache-oblivious algorithms we are aware of only two sorting algorithms: FunnelSort and multiway distribution sort of Frigo et al.~\cite{frigo2012,prokop}
who defined the cache-oblivious model.
Neither of the two algorithms is very natural they seem to be purpose built for the cache-oblivious model.
In cache-oblivious model we analyze the input-output behaviour of algorithms in the external memory model of Aggarwal and Vitter~\cite{AggarwalVitter88} with a cache.
We count the number of block transfers between the cache of size $M$ and the external memory ({\em IO complexity}), where memory blocks are of size $B$.
The two parameters are not known to the algorithm.

In this paper we introduce a new sorting algorithm that we call SquareSort. 
It is randomized and
its expected IO complexity is asymptotically optimal in the cache-oblivious model~\cite{frigo2012,prokop}.
Under the standard {\em tall cache assumption} that $M\ge B^2$, we establish the following bound on its complexity:

\begin{theorem}[Informal]
SquareSort of $n$ items uses $O(\frac{n}{B} \log_{M/B} n)$ IOs in expectation over its randomness.
\end{theorem}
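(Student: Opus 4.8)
The plan is to bound the expected IO cost $Q(n)$ of SquareSort by a recursion-tree argument: separate, inside one invocation, the cost of the non-recursive ``glue'' — the $\SkewTranspose$ and $\NaiveSkewTranspose$ rearrangements and the (randomized) choice of the $\Theta(\sqrt n)$ splitters/row-boundaries — from the cost of the recursive calls, charge $O(1+n/B)$ to the glue, and then sum over all nodes of the recursion tree, using the tall-cache assumption $M\ge B^2$ to dispatch the base cases.

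First I would show that one invocation on $n$ items, excluding its recursive calls, performs $O(1+n/B)$ IOs. The crux is the data-movement primitive: $\SkewTranspose$ of $n$ items should cost $O(1+n/B)$ IOs under $M\ge B^2$ by the usual recursive-transposition analysis — split the longer side of the conceptual $\sqrt n\times\sqrt n$ array, recurse, and, for the analysis only, stop the recursion at submatrices of side $\Theta(B)$; there are $O(n/B^2)$ such leaf submatrices, they tile the array, each costs $O(B)$ IOs to touch, and the $O(n/B^2)$ internal nodes cost $O(1)$ each, for $O(n/B)$ in total. $\NaiveSkewTranspose$ inherits the same bound because the algorithm only ever applies it to inputs of size $O(M)$, which fit in cache. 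Choosing the splitters from a random sample costs at most one recursive sort on a subproblem of size $o(n)$ plus an $O(n/B)$ scan, and is absorbed.

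Next I would set up and solve the recurrence. After laying the $n$ items out in a $\sqrt n\times\sqrt n$ square and running a constant number $c$ of rounds of ``recursively SquareSort every row'', interleaved with $\SkewTranspose$ steps, the invocation spawns recursive calls whose sizes sum to at most $c\,n$. The randomness is used to control the individual sizes: with splitters drawn from a sufficiently large random sample, each subproblem has size $\Theta(\sqrt n)$ except with probability $n^{-\omega(1)}$ by a Chernoff bound; a union bound over the $n^{O(1)}$ nodes of the recursion tree makes this hold everywhere, and on the polynomially rare failure the algorithm falls back to an $O(n^2/B)$-IO sort, contributing only $o(n/B)$ to the expectation. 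Thus $Q(n)\le c\sqrt n\cdot Q(O(\sqrt n))+O(1+n/B)$, while $Q(n)=O(1+n/B)$ as soon as $n\le\alpha M$, since then the subproblem occupies one contiguous region of $O(n/B)$ blocks, fits in cache, and its whole sub-recursion is paid for by a single load. In the recursion tree the depth-$d$ nodes then have size $n^{\Theta(2^{-d})}$, with total size $\le c^d n$, so the glue at depth $d$ costs $O(c^d n/B)$ (the node count is $O(c^d n/B)$ too, since each still-recursing node has size $>\alpha M\ge\alpha B^2$, and the two-sided balance keeps the eventual base-subtree roots of size $\Omega(\sqrt M)=\Omega(B)$). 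Sizes reach $\alpha M$ at depth $\ell=\log_2\log_M n+O(1)$, so summing the geometric series, $Q(n)=O(c^\ell n/B)$. With $c=2$ this is $O\!\left(\tfrac{n}{B}\,2^{\log_2\log_M n}\right)=O\!\left(\tfrac{n}{B}\log_M n\right)$, and $\tfrac12\log M\le\log(M/B)\le\log M$ gives $\log_M n=\Theta(\log_{M/B}n)$, which is the claimed bound.

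The hardest part is not the recurrence but two pieces of bookkeeping. First, the blow-up constant must be exactly $c=2$: a value $c>2$ would turn the geometric sum into $O\!\left(\tfrac{n}{B}(\log_{M/B}n)^{\log_2 c}\right)$ and lose optimality, so one must verify that two rounds of recursive row-sorts suffice for correctness and that no round silently re-sorts the data a third time (a blow-up below $2$ is in any case ruled out by the Aggarwal--Vitter lower bound~\cite{AggarwalVitter88}). Second, the balance guarantee has to be converted into a bound on the \emph{expectation}, uniformly over the random recursion tree — hence the explicit worst-case fallback on rare unbalanced splits, or equivalently resampling with $O(1)$ expected retries per node — and one must check that the algorithm keeps every recursive subproblem in a contiguous block of memory, without which the ``fits in cache'' step would not cost only $O(n/B)$.
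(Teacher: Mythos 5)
There is a genuine gap, and it sits exactly where the paper says the difficulty lies. Your plan dispatches the randomness by claiming that ``each subproblem has size $\Theta(\sqrt n)$ except with probability $n^{-\omega(1)}$ by a Chernoff bound,'' then union-bounding over the recursion tree and charging rare failures to a worst-case fallback. For this algorithm that claim is false: the splitters are only $m-1\approx\sqrt n$ uniformly random elements (not an oversampled set), so the bucket sizes behave like spacings of $\sqrt n$ random points in $\{1,\dots,n\}$ --- a fixed bucket exceeds $t\sqrt n$ with probability roughly $\e^{-t}$ (Proposition~\ref{p-distrupper} in the paper), the largest bucket is $\Theta(\sqrt n\log n)$ with high probability, and constant-size buckets appear with constant probability per bucket. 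Worse, even a correctly stated tail bound is only exponentially small in the \emph{current} instance size, and at depth $O(\log\log n)$ the instances have size $\mathrm{polylog}(n)$ or smaller while their number is $\Omega(n/\mathrm{polylog}(n))$; so ``bad'' splits are not rare events to be union-bounded away --- they happen in abundance at every deep level. This is precisely the obstruction the paper flags in the introduction, and it is why no whp-balance argument over the whole tree can close the recursion. (Your proposed $O(n^2/B)$ fallback is also not part of the algorithm; the paper only invokes a crude worst-case bound, $O(n^5\log^2 n)$, inside the base-case Lemma~\ref{lem-basis}, where the bad event has probability polynomially small in $M$, not in $n$.)

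The paper's route keeps the random bucket sizes inside the recurrence, $T(n)\le mT(m)+\E_{(n_1,\dots,n_m)\sim\mu_n^m}\bigl[\sum_i T(n_i)\bigr]+O(1+n/B)$, and controls the expectation directly: the key lemma shows $\E\bigl[\sum_i n_i\log n_i\bigr]\le \tfrac12 n\log n+O(n)$ (with a companion lower-tail bound, Proposition~\ref{p-supper}, to handle the $\max(\log_s n_i,1)$ truncation in Corollary~\ref{c-expectation}), and the induction closes with the hypothesis $T(n)\le\frac{cn}{B}(2\max(\log_{M/\alpha}n,1)-1)+\frac{c}{16}$, whose ``$2(\cdot)-1$'' shape is what absorbs the doubling of total size per level. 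So the factor-two issue you single out as the hardest part is actually the easy part (columns sum to $n$ and buckets sum to $n$, trivially), while the genuinely hard part --- converting per-node size distributions into an expectation bound uniform over a very wide, shallow random tree --- is the piece your proposal replaces with an unavailable concentration-plus-union-bound argument. Your treatment of the glue cost is closer to the mark: the recursive-transposition analysis with leaves of side $\Theta(B)$ is the right picture for $\SkewTranspose$, though the paper must additionally argue that the skewed (variable-size) buckets still admit the interleaved-scan accounting, using the tall-cache assumption to keep one frontier block per active column and per active bucket in cache; your justification that $\NaiveSkewTranspose$ is only applied to inputs ``of size $O(M)$'' is not the reason it is cheap.
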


Our algorithm is a natural sorting algorithm. 
It is a variant of distribution sort and one could argue that it is the cache-oblivious analog of QuickSort. 
Its expected running time is $O(n\log n)$ which is asymptotically optimal for comparison-based sorting algorithms.
Our algorithm is similar to the sorting algorithm proposed for various other cache models~\cite{aggarwal1987hierarchical,vitter1994algorithms}
and for parallel models of computation \cite{Reischuk85,ChaudhryCormen05,ChaudhryCormen06,SharmaSen12}.

The algorithm views the array which it should sort as a $\sqrt{n}\times \sqrt{n}$ matrix, it sorts the matrix column-wise,
then it performs a {\em transpose} operation on the matrix and sorts the matrix column-wise once more.
The main new ingredient is the transpose operation which we call {\em $\SkewTranspose$}.
{\em $\SkewTranspose$} is an operation similar to {\em Generalized Matrix Transposition} of Aggarwal et al.~\cite{aggarwal1987hierarchical}.
However according to~\cite[Theorem 3.5]{aggarwal1987hierarchical}, their implementation of the operation in their memory model is losing extra $\log \log n$-factors in its complexity. 
(The paper does not provide details of the implementation and defers the details to the full version of the paper which we could not locate. The paper provides brief reasons for the loss of the two log-log-factors which come down to the implementation.)
Our implementation of $\SkewTranspose$ uses linear time $O(n)$ and the number of IO's is $O(1+n/B)$ which is optimal in our model. 

We provide a full analysis of the algorithm. 
The analysis is not entirely straightforward as one needs to analyze a recurrence relationship with sub-problems of random sizes.
Although the analysis gives what one would have expected and hopped for this was not apriori clear for the following reason.
The recursion is shallow (only $\log \log n$-depth) but very wide. 
At a level of the recursion where instances have size $O(\sqrt{\log n})$ 
the tree already has more than $\Omega(n/\log \log n)$ of those instances so even events that
have {\em exponentially} small probability in the instance size will happen with abundance.
Thus we provide a careful analysis of the algorithm.

We implemented our algorithm and compared its actual running time with the standard C++ library sort and a reference implementation of FunnelSort by R{\o}nn~\cite{ronn}.
Our algorithm is about twice slower than the library sort and twice faster than the FunnelSort.
In Appendix we provide a sample C-code for the algorithm.
Thus our contribution is mostly conceptual---a simple cache-oblivious sorting algorithm.

\subsection{Cache-oblivious analysis}

In this paper we focus on the {\em cache-oblivious analysis} of algorithms \cite{frigo2012,prokop}.
We use the external memory model with a cache of Aggarwal and Vitter~\cite{AggarwalVitter88}. 
We think of the external memory as consisting of cells, each cell can store either a single item (of an array to be sorted), or an integer counter or a pointer.
The external memory can only be accessed via the cache.
The cache has total size $M$ memory cells and it is organized into $M/B$ blocks of size $B$.
The external memory is also partitioned into blocks of $B$ cells.
When accessing some cell in the external memory (during either read or write) the whole block containing that cell is transferred to the cache and made available for processing.
The cache {\em paging algorithm} is responsible for managing the cache and deciding which memory block is stored where in the cache and which block from the cache is evicted if the cache is full and a new memory block needs to be brought into the cache.
(If a block in the cache was modified while in cache it has to be written back to the external memory during its eviction.)

We are interested in the number of block transfers made between the cache and the external memory during an execution of a program. We call the transfers {\em IO's} ({\em input-output operations}).
For a given sequence of memory accesses generated by our program on a specific input, the number of IO's might depend on the paging algorithm.
We make the standard assumption that the paging algorithm is optimal with respect to our program
(and its input) so it generates the least IO's possible for each memory access sequence.
(This assumption is justified for standard paging strategies such as LRU which are optimal up to constant factors.)

Our algorithm is unaware of the actual cache parameters and we analyze it with respect to the parameters $M$ and $B$ in {\em cache-oblivious setting}.
We make the standard {\em tall cache assumption} that $M\ge B^2$.
The general goal is to devise algorithms that give an asymptotically optimal number of IO's regardless of the setting of the two parameters $M$ and $B$.
This has the miraculous effect that such an algorithm is optimal simultaneously for all cache levels in systems with memory hierarchy.
Our sorting algorithm achieves that optimality.

\subsection{Memory management within the external memory.} 

Our algorithm sorts an array with $n$ items (so occupying $n$ continuous memory cells) and outputs the elements into another array of $n$ continuous memory cells.  
Our algorithm uses functions that are called recursively.
We assume that parameters and local variables for each invocation of a function are stored on a single continuous call stack placed somewhere in the external memory.
The call stack is also used to allocate variable size arrays that are local variables within a function.
Arrays are passed to functions as a pointer to the first element of the array.
In particular, if we need to pass a sub-array of an existing array to a function we pass the pointer to the first element of the sub-array.
So passing an array to a function involves $O(1)$ memory accesses.

\section{Our algorithm}

Our sorting algorithm is inspired by the ColumnSort algorithm of Leighton~\cite{leighton}.
It is similar to distribution sort for various other cache models~\cite{aggarwal1987hierarchical,vitter1994algorithms},
sorting in parallel models of computation \cite{Reischuk85,ChaudhryCormen05,ChaudhryCormen06,SharmaSen12}, and it has some similarities to the cache-oblivious distribution sort of Frigo et al.~\cite{frigo1999,prokop}.
Our algorithm sees the array to be sorted as an $m\times m$ matrix stored column-wise,
for $m=\sqrt{n}$. 
The algorithm first recursively sorts each column, then it performs a {\em skew transposition} of the matrix,
and eventually, it sorts again each column recursively. 
The {\em skew transposition} of a matrix and its cache-oblivious implementation is our main new ingredient.
To the best of our knowledge it was not presented before in this form.
Ideally, the skew transposition transposes the matrix so that the first column contains the $m$ smallest elements of the matrix,
the next column contains the next $m$ smallest elements, and so on. 
Thus sorting each column after the transposition sorts the whole matrix.
Figure~\ref{f-sqs} illustrates the algorithm.

\begin{figure}[H]
\begin{tikzpicture}[scale=0.6]
\draw (2,4.5) node{$\sqrt{n}$};
\draw (-1,2) node{$\sqrt{n}$};

\draw (0,0) rectangle +(4,4);
\draw (0.5,0) rectangle +(0.5,4);
\draw (1,0) rectangle +(0.5,4);
\draw (1.5,0) rectangle +(0.5,4);
\draw (2,0) rectangle +(0.5,4);
\draw (2.5,0) rectangle +(0.5,4);
\draw (3,0) rectangle +(0.5,4);
\draw (3.5,0) rectangle +(0.5,4);

\draw [->] (4.5,2) -- (5.5,2);
\draw (5,1.7) node {Sort};

\draw (6.25,2) node[rotate=90] {$>$};
\draw (6.75,2) node[rotate=90] {$>$};
\draw (7.25,2) node[rotate=90] {$>$};
\draw (7.75,2) node[rotate=90] {$>$};
\draw (8.25,2) node[rotate=90] {$>$};
\draw (8.75,2) node[rotate=90] {$>$};
\draw (9.25,2) node[rotate=90] {$>$};
\draw (9.75,2) node[rotate=90] {$>$};

\draw (6,0) rectangle +(4,4);
\draw (6.5,0) rectangle +(0.5,4);
\draw (7,0) rectangle +(0.5,4);
\draw (7.5,0) rectangle +(0.5,4);
\draw (8,0) rectangle +(0.5,4);
\draw (8.5,0) rectangle +(0.5,4);
\draw (9,0) rectangle +(0.5,4);
\draw (9.5,0) rectangle +(0.5,4);

\draw [->] (10.5,2) -- (12.5,2);
\draw (11.5,1.2) node {
	\begin{tabular}{c}
		Skew \\
		transpose \\
	\end{tabular}};
	
\draw (13.5,4.3) node {$<$};
\draw (14,4.3) node {$<$};
\draw (14.5,4.3) node {$<$};
\draw (15,4.3) node {$<$};
\draw (15.5,4.3) node {$<$};
\draw (16,4.3) node {$<$};
\draw (16.5,4.3) node {$<$};

\draw (13,-0.5) rectangle +(0.5,4.5);
\draw (13.5,0.5) rectangle +(0.5,3.5);
\draw (14,-1) rectangle +(0.5,5);
\draw (14.5,1) rectangle +(0.5,3);
\draw (15,0) rectangle +(0.5,4);
\draw (15.5,-0.5) rectangle +(0.5,4.5);
\draw (16,0.5) rectangle +(0.5,3.5);
\draw (16.5,0) rectangle +(0.5,4);

\draw [->] (17.5,2) -- (18.5,2);
\draw (18,1.7) node {Sort};

\draw (19.5,4.3) node {$<$};
\draw (20,4.3) node {$<$};
\draw (20.5,4.3) node {$<$};
\draw (21,4.3) node {$<$};
\draw (21.5,4.3) node {$<$};
\draw (22,4.3) node {$<$};
\draw (22.5,4.3) node {$<$};

\draw (19.25,2) node[rotate=90] {$>$};
\draw (19.75,2) node[rotate=90] {$>$};
\draw (20.25,2) node[rotate=90] {$>$};
\draw (20.75,2) node[rotate=90] {$>$};
\draw (21.25,2) node[rotate=90] {$>$};
\draw (21.75,2) node[rotate=90] {$>$};
\draw (22.25,2) node[rotate=90] {$>$};
\draw (22.75,2) node[rotate=90] {$>$};

\draw (19,-0.5) rectangle +(0.5,4.5);
\draw (19.5,0.5) rectangle +(0.5,3.5);
\draw (20,-1) rectangle +(0.5,5);
\draw (20.5,1) rectangle +(0.5,3);
\draw (21,0) rectangle +(0.5,4);
\draw (21.5,-0.5) rectangle +(0.5,4.5);
\draw (22,0.5) rectangle +(0.5,3.5);
\draw (22.5,0) rectangle +(0.5,4);
\end{tikzpicture}
\caption{An illustration of the SquareSort algorithm.}\label{f-sqs}
\end{figure}
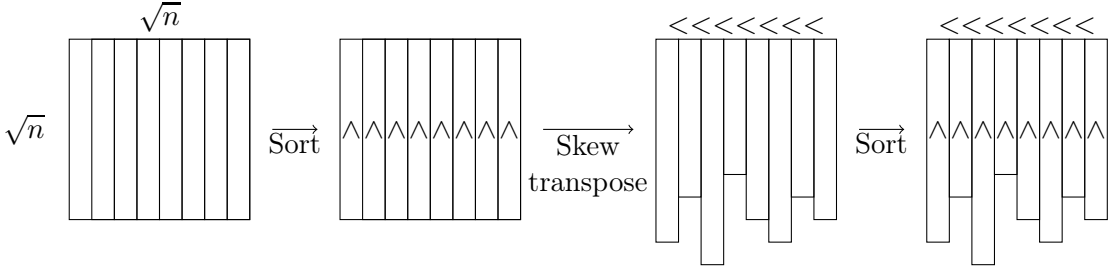

The actual skew transposition uses $m-1$ randomly chosen {\em pivots} $p_1<p_2 < \cdots  <p_{m-1}$, and ``transposes''
the matrix so that the $i$-th column (which we call {\em bucket}) contains elements of the matrix in the range $[p_{i-1},p_i)$.
Here, we set $p_0=-\infty$ and $p_m=\infty$. 
In particular, after the skew transposition, each column of the matrix might have a different size, but in expectation the size
of each bucket is $n/m=O(\sqrt{n})$.

Our key insight is that given any sequence of pivots $p_1,\dots,p_{m-1}$ we can perform the skew transpose operation 
using $O(m+m^2/B)=O(m+n/B)$ input-output operations ({\em IO's}) using the divide-and-conquer strategy similar to usual cache-oblivious matrix transposition.
Indeed, in the special case that each sorted column of the original matrix contains exactly one element from each bucket
the skew transposition coincides with a normal matrix transposition.
Somewhat surprisingly, having the bucket elements distributed unevenly among the columns benefits the IO-complexity of the skew transposition.

The expected number of IO's of the algorithm is governed by the recurrence relationship
\[
    T(n) \le 
\begin{cases}
    O\left(1+\frac{n}{B}\right),                               & \text{if } n\le M/\alpha\\
    \sqrt{n} T(\sqrt{n}) + \E_{n_1+n_2+\dots n_m = n}[ \sum_{i=1}^m T(n_i) ] + O\left(1+\frac{n}{B}\right), & n> M/\alpha
\end{cases}
\]
for some constant $\alpha>0$, which implies $T(n) \le O(\frac{n}{B} \cdot \log_{M/B} n)$.

\subsection{Detailed description of SquareSort}

Here we provide a detailed description of our sorting algorithm. 
The algorithm takes array $A$ as input and sorts it into an array $D$.
For the simplicity of exposition, we assume that all elements in $A$ are distinct.
The order of elements in $A$ might change, and the elements might get permuted. 
Algorithm~\ref{alg-squaresort} gives the pseudo-code of the sorting procedure.

First we partition $A$ into $m$ columns each of size at most $m$, where $m=\lceil \sqrt{n} \rceil$.
For each column, the index of the first element is stored in the array $col[1\dots m]$, and the array $colEnd[1\dots m]$ gives the position of the first element after each column.
Once we determine each column we sort it into the same position in $D$ using a recursive call to $\SquareSort$.
Then we sample $m-1$ distinct pivots other than the minimal element in $D$, and store them sorted in an array $pivots[1\dots m]$ where $pivots[m]=\infty$. 
We give details of this procedure below.
The pivots define buckets of elements, the $i$-th bucket consisting of elements in $D$ from range $[pivots[i-1],pivots[i])$, where $pivots[0]$ is defined to be $-\infty$.
Then we calculate the position of each bucket in the final sorted array. 
This step is a preparation for $\SkewTranspose$ and we will explain its efficient implementation in Section~\ref{sec-skewtranspose}.
Next we call $\SkewTranspose$ which transposes the elements from $D$ back into $A$ so that each bucket forms a continuous part of $A$.
The last step is to recursively sort each bucket from $A$ into $D$.

\begin{algorithm}[H]
\label{algo:squaresort}
   \caption{$\SquareSort(A,D,n)$}\label{alg-squaresort}
   \KwIn{Arrays $A$ and $D$ of size $n$.}
   \KwOut{Sorts items from $A$ into $D$.}
   
   \vspace{1mm}
   \hrule\vspace{1mm}
   \If{$n \leq 16$}{
   
     $\mathrm{simple\_sort}(A, D, n)$; \tcp*[r]{sort small arrays directly}
   
   }
   
   $m = \lceil \sqrt{n} \rceil$;

   Allocate arrays $col[1\dots m]$, $colEnd[1\dots m]$, $pivots[1\dots m]$ and $buc[0\dots m]$;

   \BlankLine

   \For(\tcp*[f]{calculate span of each column}){$i=1,\dots,m$}{$col[i] = 1 + \min((i-1)*m,n)$; $colEnd[i] = 1+\min(i*m,n)$; }

   \BlankLine

   \For(\tcp*[f]{sort each column of $A$ into $D$}){$i=1,\dots,m$}{$\SquareSort(A[col[i], colEnd[i]-1], D[col[i], colEnd[i]-1], colEnd[i]-col[i])$; } 

   \BlankLine

   Sample uniformly at random set $P$ of $m-1$ distinct elements from $D\setminus \{\min(D)\}$ and sort $P \cup \{\infty\}$ into $pivots[1\dots m]$  \tcp*[r]{select pivots}

   \BlankLine

   Calculate $buc[1 \dots m]$, where $buc[j] =  1+|\{ t \in \{1,\dots,n\};\; D[t] < pivots[j]\}|$
   
   \BlankLine

   $\SkewTranspose(D,A,m,col,colEnd,m,pivots,buc)$; \tcp*[r]{skew transpose $D$ into $A$ }

   \BlankLine

   Set $buc[0]=1$  \tcp*[r]{$\SkewTranspose$ shifted $buc[1\dots m]$ by one position}

   \For(\tcp*[f]{sort each bucket from $A$ into $D$}){$j=1,\dots,m$}{$\SquareSort(A[buc[j-1], buc[j]], D[buc[j-1], buc[j]], buc[j]-buc[j-1])$; }  

\end{algorithm}

The selection of distinct pivots from $D$ can be done by sampling a sequence of $m-1$ elements from $D$ uniformly at random with repetition, sorting the sequence, and if there is any repeated element, 
re-sampling the whole set. 
By the converse of the Birthday paradox, in each round of sampling, we succeed with a constant probability to choose distinct elements.
Hence, the expected number of re-sampling is bounded by a constant.

We put the additional requirement that the minimal element of $D$ is not selected as a pivot. 
We can verify this condition by comparing the smallest selected pivot with the smallest element in each sorted column of $D$. 
Again, we re-sample if the condition is violated. 
This happens with a small probability $O(1/\sqrt{n})$.
Since we are selecting $m=O(\sqrt{n})$ pivots the pivot selection is a comparatively cheap operation in terms of IO's.
Thus sorting the pivots can be done by an ordinary MergeSort which uses $O( (m/B) \log m)$ IO's.

We require the pivots to be distinct for the sake of our further analysis.
We provide a working sample C-code in the appendix which deviates slightly from the above description as it selects the pivots independently ignoring repetitions.
However, our C-code can handle repeated elements by creating a single value bucket when multiple pivots are the same and not sorting such single value buckets further.

\subsection{Skew Transposition}\label{sec-skewtranspose}

A skew transposition takes as input an array of $n$ elements partitioned into $m$ sorted columns and a sequence of pivots $p_1<p_2 < \cdots  <p_{m-1}$ defining $m$ buckets, and 
transfers the elements into a destination array so that elements in each bucket form a consecutive part of that array.

To perform skew transposition we first precompute the size of each bucket and determine the final position of each bucket in the destination array.
Computing all the bucket sizes can be done using $O(n/B)$ IO's by reading the matrix column by column and updating the bucket sizes.
This relies on the fact that columns are already sorted.
For each column, we perform a scan of the column simultaneously with the scan of the pivots and update simultaneously the bucket sizes.
This causes $O(n/B)$ IO's.
From the bucket sizes, we can calculate the position of each bucket in the final array by a simple linear scan using $O(m)$ IO's.

Once we calculate the bucket positions we are ready to skew-transpose the matrix.
Let $A$ be the array containing the source matrix, and $D$ be the array prepared for the destination matrix.
$\SkewTranspose$ is a recursive procedure that is invoked with the following parameters:
It gets $\ell$ sub-columns (partial columns) of the source matrix specified by an array of $\ell$ starting indexes of the sub-columns in $A$,
it gets an array of $k$ pivots, and it gets an array of $k$ bucket indexes in $D$ where the elements from the $\ell$ sub-columns shall be stored.
The latter indexes point to the first empty slots in their respective buckets.
(For technical reasons $\SkewTranspose$ also gets an upper bound on the index of each column.)

The transposition moves the elements from the $\ell$ sub-columns in $A$ that are within the range of the requested buckets
into the reserved slots in $D$ and while doing so it increases the indexes of the sub-columns and buckets beyond the used elements.
(We think of all the arrays as being passed by reference.) See Fig.~\ref{f-st} for illustration.

\begin{figure}
\begin{tikzpicture}
\draw (-1.5,4) node {$pivots[1] \le$};
\draw (-1,2.5) node {$col[i]$};
\draw (-1.5,1) node {$< pivots[\ell]$};
\draw (1.25,2.5) node {$i$};
\draw (0,-1) node {$colEnd[i]$};
\draw [<->] (3.2,3.5) -- (3.2,1.5);

\draw [->] (-0.3,4) -- (1.1,3.6);
\draw [->] (-0.3,2.5) -- (1.1,3.4);
\draw [->] (-0.3,2.5) -- (1.1,1.6);
\draw [->] (-0.3,1) -- (1.1,1.4);
\draw [->] (0.8,-1) -- (1,-0.1); 

\draw (0,0) rectangle +(5,5);
\draw (0,1.5) rectangle +(5,2);
\draw (1, 1.5) rectangle +(0.5,2);
\draw (1, 0) rectangle +(0.5,5);
\draw (1.5, 1.5) rectangle +(0.5,2);
\draw (2, 1.5) rectangle +(0.5,2);
\draw (2.5, 1.5) rectangle +(0.5,2);

\draw [->] (5.5,2.5) -- (6.5,2.5);

\draw [->] (3.3,1.6) -- (8.4,-1.1);

\draw (9.75,-1.4) node {$\ell$};
\draw [<->] (8.5,-1.2) -- (11,-1.2); 
\draw (13.3,3) node {$buc[j]$};
\draw (9.25,3) node {$j$};
\draw [->] (12.7,3) -- (9.4,3.9);
\draw [->] (12.7,3) -- (9.4,2.1);

\draw (7,-1) rectangle +(0.5, 6);
\draw (7.5,-0.5) rectangle +(0.5, 5.5);
\draw (8,0) rectangle +(0.5, 5);
\draw (8.5,-1) rectangle +(0.5, 6);
\draw (9,1) rectangle +(0.5, 4);
\draw (9.5,0.5) rectangle +(0.5, 4.5);
\draw (10,-1) rectangle +(0.5, 6);
\draw (10.5,2) rectangle +(0.5, 3);
\draw (11,0) rectangle +(0.5, 5);
\draw (11.5,-1) rectangle +(0.5, 6);
\draw (12,0) rectangle +(0.5, 5);
\draw (8.5,2) rectangle +(0.5,2);
\draw (9,2) rectangle +(0.5,2);
\draw (9.5,2) rectangle +(0.5,2);
\draw (10,2) rectangle +(0.5,2);
\draw (10.5,2) rectangle +(0.5,2);
\end{tikzpicture}
\caption{Illustration of a call to $\SkewTranspose$. Pointers $col[i]$ and $buc[j]$ will advance during the procedure.}\label{f-st}
\end{figure}
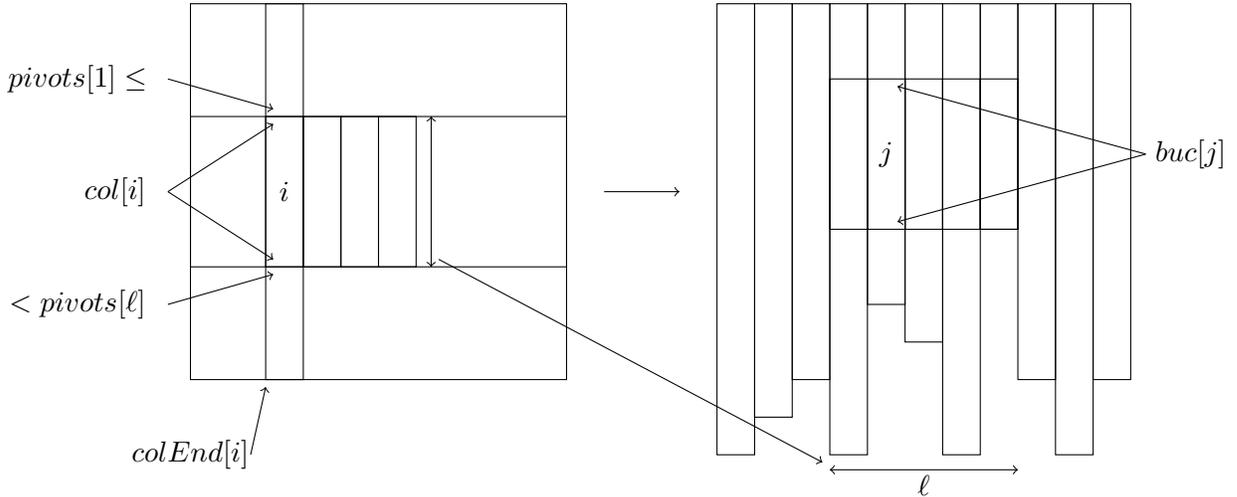

The actual transposition is done by subdividing the problem into four equal-sized sub-problems and solving them recursively:
first we recursively call $\SkewTranspose$ on the first $\ell/2$ sub-columns with the first $k/2$ pivots,
then we call $\SkewTranspose$ on the second $\ell/2$ sub-columns with the first $k/2$ pivots,
then we call $\SkewTranspose$ on the first $\ell/2$ sub-columns with the second $k/2$ pivots,
and finally we call $\SkewTranspose$ on the second $\ell/2$ sub-columns with the second $k/2$ pivots.
Once the dimensions $k$ and $\ell$ reach some small enough threshold we apply a na\"{\i}ve skew transposition algorithm.
This finishes the $\SkewTranspose$. 

Algorithms \ref{alg-skewtranspose} and \ref{alg-naiveskewtranspose} show the pseudo-code of $\SkewTranspose$ and  $\NaiveSkewTranspose$, respectively.

\begin{algorithm}[H]
\label{algo:skewtranspose}
   \caption{$\SkewTranspose(A,D,\ell,col,colEnd,k,pivots,buc)$}\label{alg-skewtranspose}
   \KwIn{A source array $A$, a destination array $D$, starting positions $col[1,\ell]$ of sorted sub-columns in $A$, $colEnd[1,\ell]$ upper-bound positions of sub-columns in $A$, $k$ pivots $pivots[1,k]$ and positions $buc[1,k]$ of free slots in corresponding buckets in $D$. }
   \KwOut{Moves items of sub-columns $col[1,\ell]$ that are less than $pivots[k]$ into their respective buckets in $D$. Updates $col[1,\ell]$ and $buc[1,k]$ which are passed by reference.}
   
   \vspace{1mm}
   \hrule\vspace{1mm}

    \If{$\ell<4$ or $k<4$}{ $\NaiveSkewTranspose(A,D,\ell,col,colEnd,k,pivots,buc)$. }

    $\ell' = \lfloor \ell/2 \rfloor$;

    $k' = \lfloor k/2 \rfloor$;
    
   $\SkewTranspose(A, D, \ell', \hspace{18pt} col[1,\ell'], \hspace{18pt} colEnd[1,\ell'],  \hspace{18pt} k', \hspace{14pt} pivots[1,k'], buc[1,k'])$;

   $\SkewTranspose(A, D, \ell-\ell', col[\ell'+1,\ell], colEnd[\ell'+1,\ell], k', \hspace{14pt} pivots[1,k'], buc[1,k'])$;
   
   $\SkewTranspose(A, D, \ell', \hspace{16pt} col[1,\ell'], \hspace{16pt} colEnd[1,\ell'],  \hspace{16pt} k-k', pivots[k'+1,k], buc[k'+1,k])$;

   $\SkewTranspose(A, D, \ell-\ell', col[\ell'+1,\ell], colEnd[\ell'+1,\ell], k-k', pivots[k'+1,k], buc[k'+1,k])$;

\end{algorithm}

\begin{algorithm}[H]
\label{algo:naiveskewtranspose}
   \caption{$\NaiveSkewTranspose(A,D,\ell,col,colEnd,k,pivots,buc)$}\label{alg-naiveskewtranspose}
   \KwIn{Same parameters as $\SkewTranspose$.}
   \KwOut{Same behavior as $\SkewTranspose$.}
   
   \vspace{1mm}
   \hrule\vspace{1mm}
     
   \For{$i=1,\dots,k$}{
       \For{ $j=1,\dots,\ell$}{
           \While{ $col[j]  < colEnd[j]$ and $A[ col[j] ] \le pivots[i]$ }{

               $D[ buc[i] ] = A[ col[j] ]$;

               $buc[i] = buc[i] + 1$;
               
               $col[j] = col[j] + 1$;
            }
        }
    }
    
\end{algorithm}

\subsubsection{Analysis of $\SkewTranspose$}

In this section we give a cache-oblivious analysis of our $\SkewTranspose$.
Let $n\ge 4$ and $m=\lceil \sqrt{n} \rceil$. 
Let $A$ be the array to be sorted containing $n$ elements and $D$ be the destination array for the sorted items.
We assume that $A$ is partitioned into $m$ columns, each sorted and of size at most $m$.
Let $col[1\dots m]$ and $colEnd[1\dots m]$ be arrays of indexes where for $i=1,\dots,m$, $col[i]$ is the index in $A$ of the first item of the $i$-th column.
Let $pivots[1\dots m]$ be an array containing pivots $p_1\le p_2 \le \cdots \le p_{m-1} \le p_m = \infty$.
Let $buc[1\dots m]$ be an array of indexes into $D$ where $buc[1]=1$ and for $j=2,\dots,m$, $buc[j] = 1+|\{ t \in \{1,\dots,n\};\; A[t] < pivots[j-1]\}|$.

\begin{lemma}
There exists $c_{ST} > 0$ such that for any $n\ge 4$, $m= \lceil\sqrt{n}\rceil$, $B \ge c_{ST}$, $M\ge B^2$, 
$\SkewTranspose(A,B,m,col,m,pivots,buc,colEnd)$ causes at most $c_{ST} (1+n/B)$ IO's.
\end{lemma}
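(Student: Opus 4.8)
**The plan is to analyze the recursion tree of $\SkewTranspose$ by a potential/charging argument that mimics the standard cache-oblivious matrix-transpose analysis.** The key observation is that unlike ordinary sorting recurrences, the work here is not reduced by passing through buckets — every element of the relevant sub-columns is eventually moved exactly once — so the IO cost is governed entirely by the "shape" of the recursion and the contiguity of the data it touches. I would first fix $M \ge B^2$ and $B \ge c_{ST}$ and ask: at what level of the $\SkewTranspose$ recursion does a subproblem "fit in cache"? A subproblem with parameters $\ell$ (number of sub-columns) and $k$ (number of pivots/buckets) touches: the $\ell$ sub-columns of $A$, the $k$ destination buckets of $D$, and the bookkeeping arrays $col, colEnd, pivots, buc$ restricted to the relevant ranges. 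The crucial point is that the $\ell$ sub-columns are contiguous in $A$ (they come from a contiguous block of original columns, each of length $\le m$), so they occupy $O(\ell \cdot m)$ contiguous cells, hence $O(1 + \ell m / B)$ blocks; similarly the $k$ buckets occupy a contiguous region of $D$ of total size equal to the number of elements routed there, which is $O(km)$ in the worst case, i.e. $O(1 + km/B)$ blocks; and the four index arrays restricted to ranges of size $\ell$ or $k$ cost $O(1 + \ell/B + k/B)$ blocks. So a subproblem is "cache-resident" once $\ell m + k m + \ell + k \le c M$ for a suitable constant $c$, i.e. roughly once $\ell + k \le c M / m$.

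**Next I would split the recursion tree at that threshold level and bound the two parts separately.** For a subproblem at the threshold — with $\ell, k = \Theta(M/m)$ (or smaller, if we hit the $\ell < 4$ or $k < 4$ base case first) — once its working set is loaded into the cache, the entire subtree below it, including all recursive calls down to $\NaiveSkewTranspose$, runs without further IO beyond the initial $O(1 + (\ell m + k m)/B) = O(1 + M/B)$ block loads; here the tall-cache assumption $M \ge B^2$ is what guarantees that $\Theta(M/m)$ index entries plus the associated data all fit simultaneously (one needs to check the constant $c$ is chosen so that a few such contiguous regions plus the call stack coexist in cache). The total IO incurred by all the threshold subproblems is then $\sum (\text{working-set size})/B$; since the subproblems at a given level partition the $\ell$-range (the $A$-columns) and also partition the $k$-range (the buckets and the $D$-elements), and since the data region of $A$ is $n$ cells total and the data region of $D$ is $n$ cells total, telescoping gives $O(1 + n/B + (\text{number of threshold nodes}))$; the number of threshold nodes is $O((m / (M/m))^2) = O(m^4 / M^2)$ and using $M \ge B^2$ together with $M \ge m$ in the non-trivial regime one checks this is $O(n/B)$.

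**For the part of the recursion tree above the threshold, I would charge IOs level by level.** At depth $d$ (counting the quadtree-style subdivisions) there are $4^d$ subproblems, partitioning the $\ell$-axis into $2^d$ intervals and the $k$-axis into $2^d$ intervals. The IO cost of a single internal node at that level is $O(1)$ for reading its own parameters and issuing the four recursive calls — I must be careful that entering a child and returning does not blow up the data blocks of $A$ or $D$ beyond $O(1)$ amortized, which is where I would lean on the fact that when a subproblem is above threshold it is \emph{larger} than cache, so we cannot argue "it stays resident," but we \emph{can} argue that the only genuinely new IO work an internal node does beyond its children is $O(1)$ to touch its own slice of the four index arrays and to pass pointers. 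Summing $O(1)$ over all internal nodes above the threshold gives $O(\#\text{internal nodes}) = O(4^{d^*})$ where $2^{d^*} = \Theta(m/(M/m)) = \Theta(m^2/M) = \Theta(n/M)$, so this is $O((n/M)^2) = O(n^2/M^2)$, and again $M \ge B^2$ plus $M \ge n^{1/2}$-type bounds in the regime where the recursion is nontrivial force this to be $O(n/B)$. Adding the two parts and the $O(n/B)$ preprocessing cost for computing bucket sizes (already established in the text) yields the claimed $c_{ST}(1 + n/B)$.

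**The main obstacle I anticipate is the accounting of the $D$-side.** On the $A$-side the sub-columns handed to a subproblem are genuinely contiguous and of bounded length, so "working set $=O(\ell m)$ contiguous cells" is clean. On the $D$-side the $k$ buckets assigned to a subproblem are contiguous in $D$, but their \emph{total} size can be as large as $\Theta(km)$ only in the worst case and is random in general; the delicate point is that a bucket may be spread out, and a subproblem that is "small" in the parameters $\ell,k$ might still have to write into a large region of $D$ — but in fact it only ever writes the elements that actually fall in those $k$ buckets \emph{and} come from those $\ell$ columns, and summed over all subproblems at one level this is exactly $n$ writes into $D$ total. So the honest statement is that the $D$-working-set of a subproblem is $O(1 + (\text{number of elements it routes})/B + k/B)$, \emph{not} $O(1 + km/B)$, and the telescoping must be done with this sharper bound. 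Reconciling "fits in cache" (which wants a worst-case $km$ bound) with "telescopes to $n/B$" (which wants the actual-elements bound) is the crux: I would resolve it by noting that a threshold subproblem routes at most $\ell m = O((M/m)\cdot m) = O(M)$ elements regardless, since each of its $\ell$ sub-columns has $\le m$ elements, so $O(M)$ data fits in cache by the tall-cache assumption, while the telescoped total over threshold nodes uses the actual-element count and is $O(n/B)$. Once this bound is stated with the right quantity, the rest is routine geometric-series bookkeeping of the kind familiar from Frigo--Prokop.
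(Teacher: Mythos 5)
There is a genuine gap, and it sits exactly at the point you flag as the crux: your choice of threshold and the per-leaf accounting there. You cut the recursion where a subproblem's whole working set fits in cache, i.e.\ $\ell,k=\Theta(M/m)$, and charge such a leaf $O(1+\mathrm{elements}/B+k/B)$ blocks on the $D$ side. The honest per-leaf cost is larger: each of the $k$ buckets and each of the $\ell$ column segments has its own frontier block, which in general has been evicted since the last visit (the cache cannot hold all $m$ bucket frontiers once $m>M/B$), so a leaf pays an additive $\Theta(\ell+k)$, not $O(1+(\ell+k)/B)$. With your threshold there are $\Theta\left((m^2/M)^2\right)=\Theta\left((n/M)^2\right)$ leaves, so these frontier loads alone total $\Theta\left((n/M)^2\cdot M/m\right)=\Theta\left(n^{3/2}/M\right)$, and your count of internal nodes above the threshold is likewise $\Theta\left((n/M)^2\right)$. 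Neither quantity is $O(n/B)$ under the tall-cache assumption alone: both need roughly $M\gtrsim B\sqrt{n}$, and your appeal to ``$M\ge n^{1/2}$-type bounds in the regime where the recursion is nontrivial'' is not available, because the lemma must hold (and is invoked by SquareSort) precisely when $n$ is much larger than $M$ --- for instance $M=B^2$, $n=B^6$ makes your threshold collapse into the $\ell<4$ base case and the accounting degenerate to $\Omega(n)$ IOs.

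The paper cuts at a different threshold that makes the trade-off balance: a call is ``small'' when $k<\lfloor B/4\rfloor$ (and then $\ell\le k+1$, since the recursion maintains $|k-\ell|\le 1$). At that scale what is kept resident is not the subproblem's data --- which can still be far larger than the cache --- but one frontier block per column and per bucket, the relevant slices of $col$, $colEnd$, $pivots$, $buc$, and the call stack: at most $B$ cache blocks in total, which fits because $M/B\ge B$. Then the interleaved scans of $\NaiveSkewTranspose$ cost an amortized $2n_{i,j}/B$ per column--bucket pair plus an additive $O(B)$ per small call; the products $I\times J$ of distinct small calls are disjoint, so the element terms telescope to $2n/B$; and there are only $O(n/B^2)$ small calls and internal nodes, so the additive $O(B)$ terms contribute $O(n/B)$ and the internal nodes $O(n/B^2)$. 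Your instincts about charging the actual routed elements $n_{i,j}$ and $O(1)$ per internal node are right; what is missing is that the cutoff must be tied to $B$ rather than to cache residency of the data, so that ``one block per column and per bucket'' is affordable both inside a leaf and summed over all leaves.
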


\begin{proof}
Let $c$ be an upper bound on the number of memory cells needed to store parameters and local variables of $\SkewTranspose$ and $\NaiveSkewTranspose$. Let $c_{ST}=\max(515,c)$.
We first analyze the algorithm for $n\ge B^2/8$.

First, we claim that all recursive calls to $\SkewTranspose$ are made with parameters $k$ and $\ell$ differing by at most 1.
This is true for the outermost call where $k=\ell=m$.
Furthermore, if $|k-\ell|\le 1$ then $\left|\left\lceil \frac{k}{2} \right\rceil - \left\lfloor \frac{\ell}{2} \right\rfloor  \right|\le 1$ and $\left|\left\lfloor \frac{k}{2} \right\rfloor - \left\lceil \frac{\ell}{2} \right\rceil \right|\le 1$. 
Since each recursive call of $\SkewTranspose$ is called with $k$ set to either $\left\lceil \frac{k}{2} \right\rceil$ or $\left\lfloor \frac{k}{2} \right\rfloor$ and similarly for $\ell$, 
the property $|k-\ell|\le 1$ is maintained recursively.
In particular, $k$ and $\ell$ reach constant size at about the same time, and the depth of the recursion is at most $\lceil \log m \rceil \le 2+\log n$.

Let $n_{i,j}$ be the number of elements in the $i$-th column of $A$ that belong to the $j$-th bucket that is $n_{i,j}=|\{t \in [col[i],colEnd[i]);\; A[t] \in [pivots[j-1],pivots[j]) \}|$
where we consider $pivots[0]=-\infty$.

Each recursive call to $\SkewTranspose$ or $\NaiveSkewTranspose$ is associated with index sets $I,J \subseteq \{1,\dots,m\}$, $|I|=k, |J|=\ell$,
where $I$ corresponds to the indexes of current $col[1\dots k]$ within the outermost array $col[1\dots m]$, and $J$ corresponds to the indexes of current $buc[1\dots \ell]$ within the outermost array $buc[1\dots m]$.

We claim that if $\SkewTranspose$ is called with parameter $k < \lfloor B/4 \rfloor$ then the number of IO's incurred during processing this call (including all recursive calls)
is at most:
$$
B + \sum_{(i,j)\in I\times J} \frac{2 n_{i,j}}{B}.
$$
We prove the claim first.

Notice, $\ell \le k +1 \le B/4$.
We assume that during processing the call the paging algorithm uses $B/4$ cache blocks to keep the call stack of subsequent recursive calls in the cache.
Since $k<B/4$, the depth of the subsequent recursion is $<B/4$ and by our assumptions local variables and parameters of a single call fit into $c\le B$ memory cells.
Furthermore, the paging algorithm uses $8 \le  B/4$ cache blocks to keep in the cache the parts of the outermost $col[1\dots m]$, $colEnd[1\dots m]$, $buc[1\dots m]$ and $pivots[1\dots m]$
that correspond to indexes $I$ and $J$, respectively.
Additionally, the cache uses $\le B/4$ cache blocks to maintain, for each column of $A$ with an index in $I$, one block in the cache that contains the first item in that column which wasn't transferred to $D$, yet.
Finally, the cache uses $\le B/4$ cache blocks to maintain, for each bucket of $D$ with an index in $J$, one block in the cache that contains the first empty slot in that bucket.
Together this requires $\le B$ cache blocks so it fits into the cache by our tall cache assumption.

At the bottom of the recursion, calls to $\NaiveSkewTranspose$ perform interleaved scans of $k<B/4$ columns of $A$, and transfer elements into $\ell \le B/4$ buckets of $D$.
If a call to $\NaiveSkewTranspose$ transfers $n'$ elements from $A$ to $D$ then it causes $\le 2n'/B$ IO's on $A$ and $D$. 
This is assuming that the first unfinished block of each relevant column of $A$ is already in cache and similarly for the first unfinished blocks of buckets in $D$.
Each $\NaiveSkewTranspose$ continues in scanning $A$ and $D$ from positions where previous $\NaiveSkewTranspose$ calls left.
So in total, all calls to $\NaiveSkewTranspose$ cause at most $2B/4 + \sum_{(i,j)\in I\times J} (2n_{i,j}/B)$ IO's to access $A$ and $D$.
All remaining memory accesses of the $\NaiveSkewTranspose$ calls are either to local variables and parameters or to $col[1\dots k]$, $colEnd[1\dots k]$, $buc[1\dots \ell]$ and $pivots[1\dots \ell]$
which are all kept in the cache. 
Those memory locations require $\le B/2$ memory accesses to be initially loaded into the cache.
We conclude that the $\SkewTranspose$ call with parameter $k < \lfloor B/4 \rfloor$ causes at most: $B + \sum_{(i,j)\in I\times J} (2 n_{i,j}/B)$ IO's.

The outermost call to $\SkewTranspose$ with $k=m$ will cause at most $(m / \lfloor B/8 \rfloor)^2 \le 256 m^2 / B^2$ calls to $\SkewTranspose$ with disjoint products $I\times J$ and $k \in [\lfloor B/8 \rfloor,\lfloor B/4 \rfloor)$.
The tree of the recursive calls with the {\em small} $\SkewTranspose$ calls as leaves has $128 m^2 /B^2 \le 512 n/B^2$ internal nodes.
Each internal node corresponds to a recursive call that causes at most one IO on the call stack (as $c \le B$ so local variables and parameters fit into a single memory block.)
Hence, the total number of IO's caused by invoking the outermost $\SkewTranspose$ is at most:
$$
\frac{512n}{B^2} + \sum_{(I,J)} \sum_{(i,j)\in I \times J} \left(B + \frac{2n_{i,j}}{B}\right) \le \frac{512n}{B^2} + \frac{512 n B}{B^2} + \frac{2n}{B} \le \frac{515n}{B},
$$
The first sum ranges over pairs of $I$ and $J$ corresponding to calls to $\SkewTranspose$ with $k\in [\lfloor B/8 \rfloor,\lfloor B/4 \rfloor)$. 

For $n<B^2/8\le M/8$, the algorithm can store the whole $A$, $D$ and the call stack in the cache so the number of IO's will be bounded by $c_{ST} (1+n/B)$.
\end{proof}

\section{Analysis of SquareSort}

Here we will analyze our SquareSort with respect to the number of IO's in cache-oblivious setting.
As the algorithm is randomized we will analyze its expected cost.
The algorithm is recursive so first we will establish the {\em local} cost of a call to SquareSort
without counting the cost of recursive calls.
We will argue that the expected local cost is $O(1+n/B)$ IO's.
This will establish the following recurrence on the total expected cost $T(n)$:
\begin{eqnarray} \label{rec-IO}
    T(n) \le 
\begin{cases}
    c_1 \left(1+\frac{n}{B}\right),                               & \text{if } n\le M/\alpha\\
    m T(m) + \E_{(n_1,\dots,n_m)\sim \mu_n^m}[ \sum_{i=1}^m T(n_i) ] + c_L\left(1+\frac{n}{B}\right), & n> M/\alpha.
\end{cases}
\end{eqnarray}
for some universal constants $\alpha>1$, and $c_1,c_L >1$, and any $M$ and $B$ where $M \ge B^2 \ge \alpha^2$. 
Here $m=\lceil \sqrt{n} \rceil$ and $\mu_n^m$ denotes the distribution of bucket sizes when distinct pivots are chosen at random.

Finally, we will prove an upper bounds on $T(n)$ establishing our main theorem.

\begin{theorem}\label{t-IO}
There are constants $c\ge 64$, $\beta > \alpha >1$ such that for any $B>\beta$,  $M\ge B^2$, $n\ge 1$
$$
T(n) \le \frac{cn}{B} (2 \max(\log_{M/\alpha} n,1) - 1) + \frac{c}{16}.
$$
\end{theorem}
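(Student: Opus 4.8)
The plan is to prove the bound by strong induction on $n$, fixing the two constants in the order $\beta$ then $c$: first choose $\beta$ large enough, in terms of $\alpha,c_1,c_L$, that for every admissible pair $B>\beta$, $M\ge B^2$ the threshold $q:=M/\alpha$ exceeds any fixed absolute constant the computation below needs (e.g.\ $q\ge 16$), and that the error quantities $\sqrt\alpha/B$, $1/\ln q$ and $(\log_q n)/\sqrt n\le 1/\sqrt q\le\sqrt\alpha/B$ are below the required thresholds; then choose $c$ large enough (again in terms of $\alpha,c_1,c_L$), at least $\max(64,16c_1)$. Write $L(n):=\max(\log_q n,1)$ and $g(n):=\frac{cn}{B}(2L(n)-1)+\frac c{16}$; the target is $T(n)\le g(n)$. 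The base case $n\le q$ is immediate from the base branch of~\eqref{rec-IO}: since $L(n)=1$ and $c\ge 16c_1$, $T(n)\le c_1(1+n/B)\le\frac{cn}{B}+\frac c{16}=g(n)$.

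For $n>q$ I would use the recursive branch of~\eqref{rec-IO}. Since $m=\lceil\sqrt n\rceil<n$, and since the pivots are required distinct --- which forces every bucket to contain a pivot or the minimum, hence $n_i\le n-(m-1)<n$ --- the induction hypothesis applies to $T(m)$ and to each $T(n_i)$, so it remains to show $mT(m)+\E\big[\sum_{i=1}^m T(n_i)\big]+c_L(1+n/B)\le g(n)$. The column term is disposed of by elementary estimates: $m\le\sqrt n+1\le 2\sqrt n$ and $n>q\ge B^2/\alpha$ give $\sqrt n\le\sqrt\alpha\,n/B$, so writing $m^2=n+r$ with $r\le 2\sqrt n+1$, feeding $T(m)\le g(m)$ (or, when $m\le q$, $T(m)\le c_1(1+m/B)$) into this, and using $2\log_q m-1\le\log_q n-\tfrac12$ in the case $m>q$, one gets $mT(m)\le\frac{cn}{B}\big(L(n)-\delta_0\big)$ for a fixed $\delta_0>0$ whenever $m>q$, and $mT(m)=O(n/B)$ when $m\le q$.

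The heart of the matter is the bucket term, and I expect it to rest on two facts about $\mu_n^m$, i.e.\ about the spacings $(n_1,\dots,n_m)$ of a uniformly random $(m-1)$-element subset of the non-minimal ranks. First, the exponential tail $\Pr[n_i\ge t]\le\exp(-\Omega(t/\sqrt n))$ (a negative-hypergeometric estimate): this is exactly what tames the ``shallow but wide'' behaviour flagged in the introduction, since a single bucket is much larger than $\sqrt n$ only with exponentially small probability, and this beats the $\Theta(\sqrt n)$ buckets, so the expected contribution of atypically large buckets is $o(n/B)$. Second, a concentration/entropy estimate: $\sum_i n_i\log_q n_i=n\log_q n-nH_q$ with $H_q=-\sum_i(n_i/n)\log_q(n_i/n)\ge 0$, and $\E[H_q]\ge\tfrac12\log_q n-O(1/\ln q)$ (the entropy of the spacings is, up to an $O(1/\ln q)$ deficit, that of the near-uniform distribution on $m\approx\sqrt n$ cells), which quantitatively says that the recursion depth roughly halves at each step. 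Concretely I would split the buckets into \emph{small} ($n_i\le q$) and \emph{large} ($n_i>q$). For small buckets I would bound $T(n_i)$ directly by $c_1(1+n_i/B)$ --- crucially \emph{not} by $g(n_i)$, whose additive $c/16$ would then be over-counted up to $m\approx\sqrt n$ times --- giving total $\le c_1m+c_1n/B=O(n/B)$. For large buckets I would apply $T(n_i)\le g(n_i)$, note that $L(n_i)=\log_q n_i$ there, write $n_i(2\log_q n_i-1)=n_i+2n_i\log_q(n_i/q)$, and combine $\sum_{\text{large}}n_i\log_q n_i\le\sum_i n_i\log_q n_i=nL(n)-nH_q$ with the entropy lower bound, with a lower bound on the expected mass $\sum_{\text{large}}n_i$ in large buckets (a constant fraction of $n$ near the boundary, and all but $o(n)$ of $n$ deeper in the recursion, by the concentration of spacings), and with the tail bound to control $\E[\#\{\text{large }i\}]\cdot c/16$; this yields $\E[\sum_i T(n_i)]\le\frac{cn}{B}\big(L(n)-\delta_1\big)$ for a fixed $\delta_1>0$ with $\delta_0+\delta_1>1$, whence $mT(m)+\E[\sum_i T(n_i)]+c_L(1+n/B)\le\frac{cn}{B}(2L(n)-1)\le g(n)$.

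The step I expect to be the main obstacle is making this uniform down to the ``boundary'' values of $n$ --- roughly $q<n\le q^2$, where $L(n)$ is only slightly above $1$. There the typical bucket size $\sqrt n$ can itself be $\le q$, the depth-halving is illusory (one only gets $\log_q n_i\approx 1$), and there is essentially no multiplicative slack; in that range the argument must instead show that each of $mT(m)$, the small-bucket contribution, and (via the tail bound, using $n_i(2\log_q n_i-1)\approx n_i$ for buckets barely above $q$) the large-bucket contribution is $O(n/B)$ with a constant dominated by $c$, so that the whole sum is $\le\frac{cn}{B}\le g(n)$. The delicate bookkeeping is precisely which error terms can be charged to the tiny additive slack $c/16$ and which must be charged to the multiplicative term $\frac{cn}{B}(2L(n)-1)$; the remaining pieces --- the manipulations with $\lceil\sqrt n\rceil$, the negative-hypergeometric tail, and the entropy-of-spacings inequality --- should be routine once the case split is in place.
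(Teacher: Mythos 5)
Your induction machinery for large $n$ coincides with the paper's: your exponential tail $\Pr[n_i\ge t]\le\exp(-\Omega(t/\sqrt n))$ is Proposition~\ref{p-distrupper}, your ``entropy of spacings'' estimate $\E[H_q]\ge\tfrac12\log_q n-O(1/\ln q)$ is exactly Lemma~4.6 rewritten ($\E[\sum_i n_i\log n_i]\le\tfrac12 n\log n+4\e n$), and your treatment of small buckets plays the role of Proposition~\ref{p-supper}/Corollary~\ref{c-expectation}. (Incidentally, your worry about the additive $c/16$ being over-counted $m$ times is unfounded: once $n\ge B^2$ one has $m\cdot\frac{c}{16}\le\frac{c\sqrt n}{8}\le\frac{cn}{8B}$, which is how the paper absorbs it without any small/large split.) The genuine gap is exactly where you flag it: the boundary regime, and your sketch does not close it. With base case only $n\le q:=M/\alpha$, the induction step near $n=\Theta(q^2)$ (and a constant factor beyond) needs, in your own accounting, $\delta_0+\delta_1>1$, and your $\delta_1$ rests on the claim that a constant fraction of the mass lies in buckets of size $>q$ when $\sqrt n=\Theta(q)$. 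None of the available estimates yields this: Proposition~\ref{p-supper} gives $\Pr[n_i\le q]\le \e^2 q/\sqrt n$, which is vacuous at $\sqrt n\approx q$, and the upper tail bound gives no lower bound on large-bucket mass. So your plan requires a new quantitative lower bound on the upper tail of the spacing distribution (equivalently, a non-trivial bound on $\E[n_i\mathbf{1}(n_i\le q)]$ in the regime $q=\Theta(\sqrt n)$), which is plausible but is neither in the paper nor supplied in your proposal; without it the induction does not close in that window.

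The paper sidesteps this entirely by a different mechanism: Lemma~\ref{lem-basis} proves the bound $T(n)\le c_{\mathrm{IB}}(1+n/B)$ directly for \emph{all} $n\le (M/\alpha)^3$, via a probabilistic argument that with probability $1-O(s^{-17})$ the recursion reaches subproblems of size $\le s=M/\alpha$ within two levels (using Proposition~\ref{p-distrupper}), and charging the bad event to the worst-case bound $T_W(n)\le c_W n^5\log^2 n$ (this is the sole purpose of Proposition~4.3). The induction is then run only for $n>s^3$, where the crude $2\e^2 s^2$ error term in Corollary~\ref{c-expectation} — the same small-bucket obstruction you are fighting — is dominated by $n^{2/3}$ and harmless, and where $\sqrt n\le n/B$ holds cleanly so all your $\sqrt\alpha$-tainted terms disappear. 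If you want to salvage your route, either prove the missing large-bucket mass lower bound, or adopt the paper's extended base case; as written, the boundary case is an acknowledged but unresolved hole in the argument.
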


To analyze the expected number of IO's we will first estimate the worst case space and time complexity of SquareSort.
We start with a claim that the space complexity of the algorithm is $O(n)$ in addition to arrays $A$ and $D$,
so for small values of $n$, all the data can be in the cache simultaneously. We claim:

\begin{proposition}
There is a constant $c_S>1$ such that for any $n\ge 16$, the space used by SquareSort is bounded by $c_S n$ not counting the space used by arrays $A$ and $D$.
\end{proposition}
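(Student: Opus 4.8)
The plan is to bound, by strong induction on $n$, the quantity $S(n)$ := the maximum number of memory cells that $\SquareSort(A,D,n)$ ever has allocated simultaneously (not counting the cells of $A$ and $D$). The starting observation is that at every instant of the execution the set of not-yet-returned calls forms a single chain $F_1 \to F_2 \to \dots \to F_s$ of nested $\SquareSort$ invocations, possibly topped by a chain of nested $\SkewTranspose$ calls ending in one $\NaiveSkewTranspose$ call: indeed $\NaiveSkewTranspose$ makes no calls, $\SkewTranspose$ calls only $\SkewTranspose$ and $\NaiveSkewTranspose$, and the recursive $\SquareSort$ calls inside the two \emph{for}-loops of $\SquareSort$ are issued one after another. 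Write $n=n_1>n_2>\dots>n_s$ for the sizes of $F_1,\dots,F_s$.

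I first bound, at such an instant, everything attributable to one invocation. A call $\SquareSort(\cdot,\cdot,n')$ allocates only the arrays $col,colEnd,pivots,buc$ (each of length $O(\sqrt{n'})$) plus $O(1)$ scalars; its pivot-selection step uses an additional $O(\sqrt{n'})$ cells (the sampled sequence and MergeSort's auxiliary array); and its $\SkewTranspose$ call uses an additional $O(\log n')$ cells (a recursion stack of depth $O(\log n')$ by the previous lemma, with constant-size frames since all arrays are passed by pointer). So for a suitable absolute constant $c_0$, everything attributable to one invocation occupies at most $c_0(1+\sqrt{n'})$ cells at any time. Since the stack at any instant is exactly $F_1,\dots,F_s$ (with the $\SkewTranspose$ tail, if present, attributed to $F_s$), the space in use is at most $\sum_{i=1}^{s}c_0(1+\sqrt{n_i})$; splitting off $i=1$ and noting that $\sum_{i=2}^{s}c_0(1+\sqrt{n_i})$ is exactly the space used at this instant by the sub-execution rooted at $F_2$, hence at most $S(n_2)$, we conclude that the space at every instant is at most $c_0(1+\sqrt{n})+S(n_2)$ (the second summand being absent when $s=1$).

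It remains to control $n_2$, and more generally to show $n_{i+1}\le n_i-\lceil\sqrt{n_i}\rceil+1$. Each $n_{i+1}$ is the size either of a column of $F_i$ -- hence at most $\lceil\sqrt{n_i}\rceil$ -- or of a bucket of $F_i$; and here the requirement that the pivots of $F_i$ are \emph{distinct} and that $\min(D)$ is excluded from being a pivot forces all $m_i=\lceil\sqrt{n_i}\rceil$ buckets of $F_i$ to be nonempty: bucket $1$ contains $\min(D)$ (which is strictly below the smallest pivot) and bucket $j$ for $2\le j\le m_i$ contains the pivot $pivots[j-1]$ (which lies in $[pivots[j-1],pivots[j])$ because the sorted pivots are strictly increasing). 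As the $m_i$ buckets partition the $n_i$ elements of $F_i$, the largest bucket therefore has at most $n_i-(m_i-1)=n_i-\lceil\sqrt{n_i}\rceil+1$ elements. So in either case $n_{i+1}\le n_i-\lceil\sqrt{n_i}\rceil+1$, and in particular $n_2\le n-\lceil\sqrt n\rceil+1<n$. Combining with the previous paragraph gives
\[
S(n)\ \le\ c_0\bigl(1+\sqrt{n}\bigr)+\max_{1\le k\le n-\lceil\sqrt n\rceil+1}S(k)\quad(n>16),\qquad S(n)\le c_0\quad(n\le 16),
\]
and a routine strong induction now yields $S(n)\le 5c_0\,n$ for all $n\ge 1$: the base case is immediate, and for $n>16$ the inductive hypothesis bounds every $S(k)$ in the maximum by $5c_0(n-\lceil\sqrt n\rceil+1)$, whence $S(n)\le c_0(1+\sqrt n)+5c_0(n-\lceil\sqrt n\rceil+1)\le 5c_0 n+6c_0-4c_0\sqrt n\le 5c_0 n$ since $\sqrt n>4$. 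Taking $c_S=5c_0$ (we may assume $c_0\ge 1$) proves the proposition.

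\textbf{Main obstacle.} The one genuine subtlety is that a bucket may be nearly as large as the whole instance, so this recursion does \emph{not} shrink geometrically; distinctness of the pivots is used precisely to prevent a bucket from containing all of $D$ (which would make the recursion vacuous), and the point is that even though the stack may be as deep as $\Theta(\sqrt n)$, the sum of $\sqrt{\cdot}$ along the whole strictly decreasing chain still telescopes to $O(n)$.
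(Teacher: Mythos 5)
Your proof is correct and takes essentially the same route as the paper's: bound the space attributable to each invocation on the (single) chain of nested calls by $O(1+\sqrt{n_i})$, and use the fact that distinct pivots excluding $\min(D)$ make every bucket nonempty, so the instance size drops by at least $\lceil\sqrt{n_i}\rceil-1$ per level and the allocations telescope to $O(n)$. Your explicit strong induction on $S(n)$ just formalizes what the paper states as ``the sorted array shrinks proportionally to the size of allocated arrays on this level of recursion.''
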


\begin{proof}
To prove the linear bound on the space complexity we need to bound the depth of the recursion of the SquareSort procedure.
We claim that the depth is at most $n$ and that the maximum size of the call stack is at most $O(n)$ at any moment.
(Here we ignore the space used by $\SkewTranspose$ which we already know is at most linear.)

At each level of the recursion of SquareSort, we use $O(1)$ space to store local variables and parameters
plus we allocate on stack arrays $col, colEnd, pivots, buc$, each of size $m\ge 2$.
Each recursive call is invoked to sort an array of size either $\le m$ (sorting the columns) or $\le n-m+1$ (sorting the buckets).
The latter might lead to a deeper recursion but the size of the sorted array shrinks by at least $m-1$ elements.
So the sorted array shrinks proportionally to the size of allocated arrays on this level of recursion.
Hence in total, the stack corresponding to local variables, parameters and the auxiliary arrays uses space linear in $n$.

Since $\SkewTranspose$ uses also at most linear amount of space on the stack, and similarly the MergeSort, 
the space used by our algorithm is bounded by $c_S n$ for some constant $c_S > 1$.
\end{proof}

Next we bound the worst-case time complexity of SquareSort. We show that it is $O(n^5 \log^2 n)$; a more careful analysis would give bound $o(n^3)$.

\begin{proposition}
There is a constant $c_W>1$ such that for any $n\ge 16$, the worst-case running time of SquareSort is $T_W(n) \le c_W \cdot n^5 \cdot \log^2 n$.
\end{proposition}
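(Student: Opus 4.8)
The plan is to mirror the recursive structure of Algorithm~\ref{alg-squaresort} in a recurrence for $T_W$ and then verify the (deliberately loose) bound by strong induction on $n$.

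First I would bound the \emph{local} work of one call $\SquareSort(A,D,n)$, that is, everything except the recursive calls. With $m=\lceil\sqrt n\rceil$: allocating $col,colEnd,pivots,buc$ and filling $col,colEnd$ costs $O(m)$; computing the bucket sizes and the bucket positions by interleaved scans of the columns and of the pivots costs $O(n)$; selecting and sorting the $m-1$ pivots costs $O(\sqrt n\log n)$ per rejection-sampling round (and any crude polynomial cap on the number of rounds is harmless, being dwarfed by what follows); and $\SkewTranspose$ itself runs in $O(n)$ time, since the leaf-level $\NaiveSkewTranspose$ calls collectively move the $n$ elements doing $O(1)$ work per moved element, and the $\SkewTranspose$ recursion tree has $O(m^2)=O(n)$ nodes each doing $O(1)$ local work. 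Hence the local work is at most $c_L\,n\log n$ for a suitable constant $c_L$, giving
\[
T_W(n)\ \le\ \sum_{i=1}^{m} T_W(c_i)\ +\ \sum_{j=1}^{m} T_W(b_j)\ +\ c_L\,n\log n ,
\]
where $c_1,\dots,c_m$ are the column sizes (so $\sum_i c_i=n$ and each $c_i\le m$) and $b_1,\dots,b_m$ are the bucket sizes, with $\sum_j b_j=n$ and each $b_j\le n-m+1$, since every bucket is non-empty (each pivot lies in its own bucket, and $\min(D)$ is never chosen as a pivot).

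Next I would prove $T_W(n)\le c_W\,n^5\log^2 n$ by strong induction on $n\ge 16$. For the finitely many small values (say $16\le n\le 25$) the running time is a fixed finite quantity, so $c_W$ can be taken large enough; a column or bucket of size below $16$ contributes only $T_W=O(1)$ via $\mathrm{simple\_sort}$, and there are at most $2m=O(\sqrt n)$ such, which is absorbed. For the inductive step the two estimates that matter are: (i) the column term is lower order,
\[
\sum_{i=1}^{m} T_W(c_i)\ \le\ c_W\log^2 n\sum_{i=1}^{m} c_i^5\ \le\ c_W\log^2 n\,(\max_i c_i)^4\sum_i c_i\ \le\ c_W\log^2 n\cdot m^4 n\ \le\ 16\,c_W\,n^3\log^2 n ,
\]
using $m\le 2\sqrt n$; and (ii) the bucket term is almost, but not quite, $c_W n^5\log^2 n$,
\[
\sum_{j=1}^{m} T_W(b_j)\ \le\ c_W\log^2 n\sum_{j=1}^{m} b_j^5\ \le\ c_W\log^2 n\,(\max_j b_j)^4\sum_j b_j\ \le\ c_W\log^2 n\,(n-m+1)^4 n .
\]
The key quantitative point is the gap $n^5-(n-m+1)^4 n=n\bigl(n^4-(n-m+1)^4\bigr)\ge 4n\,(m-1)(n-m+1)^3=\Omega(n^{4.5})$, which uses $m-1\ge\tfrac12\sqrt n$ and $n-m+1\ge\tfrac12 n$ for $n\ge 16$. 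Thus the slack $\Omega(n^{4.5}\log^2 n)$ in the target bound comfortably swallows both the column term $O(n^3\log^2 n)$ and the local term $O(n\log n)$ once $n$ exceeds a small absolute constant, so $T_W(n)\le c_W\log^2 n\,(n-m+1)^4 n+16\,c_W\,n^3\log^2 n+c_L\,n\log n\le c_W\,n^5\log^2 n$, closing the induction.

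The only place that needs genuine care is that two very different recursions run simultaneously: the $m$ shallow ($\log\log n$-depth) column recursions and a chain of bucket recursions whose depth can be $\Theta(\sqrt n)$ and one of whose instances may be nearly as large as $n$. Strong induction removes the need to track depth explicitly, but one must check that the per-call overhead — chiefly the $m\,T_W(m)$-type column contribution — does not pile up as fast as the available slack; this is precisely why a polynomial bound of degree $2$ or even $3$ does not close the induction on its own (the column term is then of the same order as the slack $\Theta(n^{a-1/2})$), whereas the deliberately generous $n^5\log^2 n$ closes it with room to spare, and a sharper accounting along the same lines would push the exponent below $3$ as remarked in the statement. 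The remaining bookkeeping — a worst-case cap for the rejection sampling in pivot selection and the size-$<16$ base cases — is routine.
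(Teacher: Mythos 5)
Your proof is correct, but it follows a genuinely different route from the paper's. The paper does not do a one-step strong induction: it collapses the entire chain of bucket recursions into a single recurrence by looking at the recursion tree, observing that the subtree of calls on arrays of size $>m=\lceil\sqrt n\rceil$ has at most $n-1$ nodes (each bucket subdivision is charged once), each with $O(n)$ local work and at most $m+1$ children of size $\le m$; this yields $T_W(n)\le n(m+1)T_W(m)+c_Wn^2$, a recurrence that recurses only on size $\sqrt n$ and is then unrolled over its $O(\log\log n)$ levels into a product, giving $n^5\log^2 n$. You instead keep the exact one-level recurrence $T_W(n)\le\sum_i T_W(c_i)+\sum_j T_W(b_j)+O(n\log n)$ and close a strong induction directly, using the same structural fact the paper only invokes in its space bound (every bucket is non-empty because each pivot sits in its own bucket and $\min(D)$ is never a pivot, so $\max_j b_j\le n-m+1$) together with the slack $n^5-(n-m+1)^4n=\Omega(n^{4.5})$ of the target function. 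Your induction is arguably cleaner and more self-contained, while the paper's tree-collapse makes the $\log\log n$ depth of the column recursion explicit and produces the specific exponent mechanically. Two small remarks: your parenthetical claim that a degree-$3$ bound would not close the induction is wrong — with your own estimates the column term is $\Theta(n^2\log^2 n)$ while the slack of $n^3\log^2 n$ is $\Theta(n^{2.5}\log^2 n)$, so the same argument in fact proves $T_W(n)=O(n^3\log^2 n)$ (only degree $2$ fails, where the two are of the same order), which is even consistent with the paper's remark that a sharper analysis gives $o(n^3)$; and your need to cap the number of pivot-resampling rounds to make a worst-case statement meaningful is a real issue, but it is one the paper's own proof shares and you at least flag it explicitly.
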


\begin{proof}
Consider any run of the algorithm on array of size $n$, and look at the tree of the recursion where each node corresponds to one invocation of SquareSort. 
Consider the subtree of nodes corresponding to calls of SquareSort with arrays $>m=\lceil \sqrt{n} \rceil$. 
The root is part of this subtree.
Each node in the subtree corresponds to local work that takes time at most $c_W n$, for a suitable constant $c_W$.
Each node has at most $m$ children corresponding to sorting columns each of size at most $m$
and at most $m$ children corresponding to sorting buckets.
Each of the latter nodes sorts a bucket with strictly fewer elements than the parent.
So each bucket node has several children sorting smaller buckets which together give the size of the parent bucket.
As each bucket gets subdivided into smaller and smaller buckets the process eventually reaches buckets of size $\le m$.
It is easy to see that the selected subtree has at most $n-1$ nodes corresponding to at most so many bucket subdivisions.
We get the following recurrence for $T_W(n)$:
\begin{eqnarray*}
    T_W(n) &\le& (n-1) \cdot m \cdot T_W(m) + n \cdot T_W(m) + c_W n^2 \\
    &\le& n\cdot (m+1) \cdot T_W(m) + c_W n^2 \\
    &\le& c_W \sum_{i=1}^{2\log \log n} \left( 2^{i-1} \cdot n^{\frac{1}{2^{i-1}}} \right)^2 \cdot \prod_{j=1}^{i-1} 2^j n^{\frac{3}{2^j}},
\end{eqnarray*}
where the last inequality follows by iterating the recurrence and using $m+1 \le 2\sqrt n$, for $n\ge 4$.
Notice,  $\frac{1}{2^{i-2}} + \sum_{j=1}^{i-1} \frac{3}{2^j} \le 2.5$ for any $i\ge 1$, hence,
\begin{eqnarray*}
T_W(n) &\le& c_W \cdot 2^{2\log \log n} \cdot 2^{(2\log \log n)^2/2} \cdot n^{2.5} \\
       &\le& c_W \log^2 n \cdot n^2 \cdot n^{2.5} \le c_W \cdot n^5 \log^2 n.
\end{eqnarray*}
Here we used $(\log \log n)^2 \le \log n$, for $n\ge 16$.
\end{proof}

\subsection{Establishing recurrence~(\ref{rec-IO})}

We are ready to prove recurrence~(\ref{rec-IO}). 
We set $\alpha = 2(c_S + 2)$ and $c_1=6c_S$. 
If $n \le M/\alpha$ then all the memory used by SquareSort can be stored in the cache simultaneously.
Hence, in the case that $n\le  M/\alpha$, we can bound the number of IO's by $c_1(1+n/B)$ as required.

So now we focus on the case $n > M/\alpha$ and we will count the local cost of a SquareSort call.

Assuming $B>16$,  $\mathrm{simple\_sort}(A, D, n)$ for $n<16$ implemented by e.g. InsertionSort
will cause at most 5 IO's (at most 2 IO's on $A$, at most 2 IO's on $D$ and 1 IO for local variables).

Initializing $col[1\dots m]$ and $colEnd[1\dots m]$ will cause at most $2 (m/B) + 3$ IO's.

Sampling uniformly at random $m-1$ candidate pivots from $D$ will cause $m \le 4(c_S + 2) \frac{n}{B}$ IO's.
(Indeed, if $n\ge M/\alpha \ge B^2 / \alpha$ then $\sqrt{n} \ge B/(2(c_S+2))$ 
so $mB \le 2\sqrt{n} B \cdot \frac{2(c_S+2)}{2(c_S+2)} \le 4(c_S+2) n$, hence $m \le 4(c_S+2) \cdot \frac{n}{B}$.)
The sampling is done by selecting a random index $t\in \{1,\dots,n\}$ and picking $D[t]$ as a candidate pivot.
After selecting the pivots and storing them in $pivots[1\dots m-1]$ (which costs $\le 1+m/B$ IO's)
we sort $pivots[1\dots m-1]$ using MergeSort.

MergeSort has IO complexity bounded by $c_M (1+\frac{m}{B} \log m) \le c_M (1+\frac{n}{B})$, for some constant $c_M > 1$.
(Here we assume $n>16$ so $m \log m \le n$.)
Checking that the sorted pivots are distinct and that the smallest one is not the minimal element of $D$
costs one scan over the pivots and one scan over $D$ so at most $2\frac{n}{B} + 2$ IO's.

If the check fails we have to try the whole sampling again. 
The probability that we select $m-1$ distinct pivots larger than $\min(D)$ is at least 
$\prod_{i=1}^{m-1} (1-\frac{i-1}{n-1}) \ge (1-\frac{m-2}{n-1})^{m-1} \ge \e^{-2 (m-2)(m-1)/(n-1)} \ge 1/\e^2$.
(Here we used the facts: $(1-x) \ge \e^{-2x}$ for $x\le 1/2$, $(m-2)/(n-1) \le 1/2$, and $(m-2)(m-1)/(n-1) <1$ for $n>2$.)
So the expected number of repetitions before we succeed sampling distinct pivots is $\e^2$.
Thus the expected cost of sampling the pivots is $\le \e^2 [4(c_S+2) + c_M + 2]\cdot \frac{n}{B} + \e^2 [c_M + 3]$.

As explained in Section~\ref{sec-skewtranspose}, calculating $buc[1\dots m]$ costs $\le c_B(1+n/B)$ for some suitable constant $c_B \ge 1$, and $\SkewTranspose$ costs $\le c_{ST} (1+n/B)$ IO's.
Preparing the parameters for $2m$ recursive calls to SquareSort can cause $O(m)$ IO's in total.
Hence, the total local cost is $\le c_L(1+\frac{n}{B})$ IO's, for some universal constant $c_L > 1$.
Hence, the expected cost of SquareSort satisfies recurrence~(\ref{rec-IO}).

\subsection{Analysis of expected bucket sizes}

To prove the main theorem we need to establish some useful properties of the distribution on bucket sizes.
Let $\mu_n^m$ denote a distribution on vectors $(n_1,\dots,n_m) \in \{1,\dots,n-1\}^n$ which is obtained by sampling uniformly at random a set of elements $p_0<p_1<\cdots < p_{m-1} \in \{1,\dots, n\}$ and setting $n_i = p_i - p_{i-1}$, for $i<m$, and $n_m=p_0+n-p_{m-1}$.
It is easy to see that for each vector $(n_1,\dots,n_m)$ in the support of $\mu_n^m$, $\sum_{i=1}^m n_i = n$.
Moreover, the marginal distribution of each $n_i$ is the same and we denote it by $\mu_n^{m|1}$. 
We can always shift all the pivots by subtracting $p_0-1$ without affecting $n_1,n_2,\dots, n_m$ so 
the distribution $\mu_n^m$ corresponds to the distribution of bucket sizes when pivots $p_1,\dots,p_{m-1}$ are selected at random from among elements of rank $\{2,\dots,m\}$ and $p_0$ is set to be the element of rank 1. 

Now we establish the probability of large deviation of each $n_i$ from its expectation.

\begin{proposition}\label{p-distrupper}
For any $n\ge 100$, $m=\lceil \sqrt{n}\rceil$, $i\in \{1,\dots, m\}$, $t\ge 1$:
$\Pr_{(n_1,\dots,n_m)\sim \mu_n^m}[n_i \ge t\sqrt{n}] \le  \e^{-0.9 t + 0.1}.$
\end{proposition}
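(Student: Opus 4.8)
The plan is to reduce to a single coordinate and then write down an exact tail for the corresponding gap. As already observed in the text, every $n_i$ has the same marginal law $\mu_n^{m|1}$, so it suffices to bound $\Pr[n_1 \ge t\sqrt{n}]$. I will use the description of $\mu_n^m$ given just before the proposition: $p_0$ is the element of rank $1$ and $p_1 < \dots < p_{m-1}$ is a uniformly random $(m-1)$-element subset of the ranks $\{2,\dots,n\}$, with $n_i = p_i - p_{i-1}$ for $i<m$ and $n_m = n - p_{m-1} + 1$. Equivalently, $\mu_n^m$ is the uniform distribution on compositions of $n$ into $m$ positive parts, via the bijection sending $(n_1,\dots,n_m)$ to the partial sums $\{1+n_1,\,1+n_1+n_2,\,\dots\}$; this second presentation makes the equality of marginals immediate, and either description will do for the computation.

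Put $s := \lceil t\sqrt{n}\rceil$, so $\{n_1 \ge t\sqrt{n}\} = \{n_1 \ge s\}$. In the pivot picture $n_1 = p_1 - 1$, so this event says that no pivot falls among the $s-1$ candidates $\{2,\dots,s\}$; hence
\[
\Pr_{\mu_n^m}\big[\,n_1 \ge t\sqrt{n}\,\big] \;=\; \frac{\binom{(n-1)-(s-1)}{m-1}}{\binom{n-1}{m-1}} \;=\; \frac{\binom{n-s}{m-1}}{\binom{n-1}{m-1}},
\]
with the convention that the ratio is $0$ when $n-s < m-1$ (the claimed bound is trivial there, since its right-hand side is positive). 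The same identity also drops out combinatorially from a hockey-stick sum $\sum_{k\ge s}\binom{n-k-1}{m-2} = \binom{n-s}{m-1}$ over composition counts, for anyone who prefers to avoid the probabilistic rephrasing.

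When $n - s \ge m - 1 \ge 1$ — and note $m \ge 10$ and $m \le n$ since $n \ge 100$ — I expand the ratio as the telescoping product $\prod_{j=0}^{m-2}\frac{n-s-j}{n-1-j}$, whose factors all lie in $(0,1]$, and bound each factor by $1 - \frac{s-1}{n-1-j} \le 1 - \frac{s-1}{n-1} \le \e^{-(s-1)/(n-1)}$. Thus the whole probability is at most $\e^{-(m-1)(s-1)/(n-1)}$, and it only remains to check that the exponent is at least $0.9\,t - 0.1$. Using $m - 1 \ge \sqrt{n} - 1$, $s - 1 \ge t\sqrt{n} - 1$, $n - 1 < n$, and $1/\sqrt{n} \le 1/10$,
\[
\frac{(m-1)(s-1)}{n-1} \;\ge\; \frac{(\sqrt{n}-1)(t\sqrt{n}-1)}{n} \;=\; t - \frac{1+t}{\sqrt{n}} + \frac1n \;\ge\; t - \tfrac{1}{10} - \tfrac{t}{10} \;=\; 0.9\,t - 0.1 ,
\]
which gives $\Pr[\,n_i \ge t\sqrt{n}\,] \le \e^{-0.9\,t + 0.1}$.

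There is no real obstacle here; the computation is essentially routine once the right representation is in hand. The only points needing a little care are the bookkeeping with the ceiling $s = \lceil t\sqrt{n}\rceil$ and the explicit constants — this is exactly where the hypothesis $n \ge 100$ is spent, absorbing the $O(1/\sqrt{n})$ slack into the drop from coefficient $1$ to $0.9$ and into the additive $+0.1$ — together with the separate (but trivial) handling of the degenerate regime $n - s < m - 1$, where the probability is simply $0$.
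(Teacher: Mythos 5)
Your proof is correct. It shares with the paper the two key reductions --- pass to the common marginal of $n_1$ and observe that $n_1 \ge t\sqrt n$ forces every pivot to avoid the ranks $\{2,\dots,\lceil t\sqrt n\rceil\}$ --- but it bounds the avoidance probability differently. The paper re-describes the uniform $(m-1)$-subset of $\{2,\dots,n\}$ via with-replacement sampling plus resampling of duplicates, notes that the bad event forces the first $m-1$ independent draws to miss the low ranks, and so gets the bound $\bigl(1-\frac{t\sqrt n-1}{n}\bigr)^{\sqrt n-1}$; you instead write the probability exactly as the ratio $\binom{n-s}{m-1}/\binom{n-1}{m-1}$ with $s=\lceil t\sqrt n\rceil$, bound each factor of the telescoping product by $\e^{-(s-1)/(n-1)}$, and handle the degenerate regime $n-s<m-1$ separately. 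The two routes converge to essentially the same exponent arithmetic and the same constants $0.9$ and $0.1$ (with $n\ge 100$ used identically to absorb the $O(1/\sqrt n)$ slack); your version has the small advantage of starting from an exact expression and not needing the resampling device, while the paper's version avoids any binomial-coefficient bookkeeping. Your care with the ceiling, with positivity of the factors, and with the trivial case where the probability is $0$ is all in order.
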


\begin{proof}
Since each $n_i$ is distributed as $n_1$, we can focus our attention on $n_1$ and assume $p_0=1$. $n_1 \ge t\sqrt{n}$ if no element of rank $\le \lceil t\sqrt{n} \rceil$ is selected as a pivot $p_1,\dots,p_{m-1}$. 
We need to select the $m-1$ distinct pivots $p_1,\dots, p_{m-1}$ uniformly at random from elements of rank $\{2,\dots,n\}$. 
We can select the set of pivots by drawing $m-1$ uniformly random elements from the set of elements of rank $\{2,\dots,n\}$ one by one (with replacement),
and then re-sampling elements that are equal to another element selected earlier. 
We repeat the re-sampling until we obtain a set of $m-1$ distinct pivots with ranks from $\{2,\dots, n\}$.
Clearly this will give a uniformly random set of $m-1$ pivots.
In order, for $n_1 \ge t\sqrt{n}$, the first $m-1$ sampled elements must not contain any element of rank $\{2,\dots,\lceil t\sqrt{n} \rceil\}$. Hence
$$
\Pr_{n_1\sim \mu_n^{m|1}}\left[n_1 \ge t\sqrt{n}\right] \le  \left(1-\frac{t\sqrt{n}-1}{n}\right)^{\sqrt{n}-1} \le \e^{-\frac{(t \sqrt{n} - 1)\cdot (\sqrt{n} -1) }{n}} \le \e^{-t\cdot \frac{ n - \sqrt{n} }{n} + \frac{\sqrt{n} - 1 }{n}} \le  \e^{- 0.9 t + 0.1}.
$$
\end{proof}

We will need also an estimate on the deviation of each $n_i$ in the other direction.

\begin{proposition}\label{p-supper}
For any $n\ge 100$, $m=\lceil \sqrt{n}\rceil$, $s\ge 2$:
$\Pr_{(n_1,\dots,n_m)\sim \mu_n^m}[n_1 \le s] \le  \e^{2} s/\sqrt{n}.$
\end{proposition}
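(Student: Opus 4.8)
The plan is to bound the event $\{n_1 \le s\}$ by a union bound over a small set of ``bad'' pivot ranks, exploiting that $n_1 \le s$ forces the smallest pivot to have small rank. By the symmetry recorded before the proposition, the marginal of every $n_i$ is the same, so it suffices to bound $\Pr_{n_1\sim\mu_n^{m|1}}[n_1 \le s]$, and (as explained there) we may assume $p_0 = 1$, so that the pivots $p_1 < \cdots < p_{m-1}$ form a uniformly random $(m-1)$-element subset of the ranks $\{2,\dots,n\}$ and $n_1 = p_1 - 1$.

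First I would observe that $n_1 \le s$ holds if and only if the smallest pivot $p_1$ has rank at most $s+1$, i.e.\ if and only if at least one of the $m-1$ pivots lies in the set $R = \{2,3,\dots,\min(s+1,n)\}$, which has $|R| \le s$ elements. As in the proof of Proposition~\ref{p-distrupper}, the re-sampling procedure yields a pivot set that is uniformly random among all $(m-1)$-subsets of the $(n-1)$-element ground set of ranks $\{2,\dots,n\}$, so each fixed rank belongs to it with probability exactly $(m-1)/(n-1)$. A union bound over the at most $s$ elements of $R$ then gives
\[
\Pr_{n_1\sim\mu_n^{m|1}}[n_1 \le s] \;\le\; \frac{(m-1)\,|R|}{n-1} \;\le\; \frac{(m-1)\,s}{n-1}.
\]

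Finally I would substitute the estimates $m-1 = \lceil\sqrt n\rceil - 1 \le \sqrt n$ and, using $n \ge 100$, the bound $\frac{n}{n-1} \le \frac{100}{99} < \e^2$, to conclude
\[
\frac{(m-1)\,s}{n-1} \;\le\; \frac{s\sqrt n}{n-1} \;=\; \frac{s}{\sqrt n}\cdot\frac{n}{n-1} \;\le\; \e^2\,\frac{s}{\sqrt n}.
\]
(The case $s+1 > n$ is handled uniformly by the same chain, since $|R|\le s$ regardless, and in fact the right-hand side already exceeds $1$ there.) There is essentially no obstacle: the only points needing a word of care are the reduction to the uniform-subset description of the pivot set, which is exactly the reformulation already used for Proposition~\ref{p-distrupper}, and the numerical check $100/99 < \e^2$, which is immediate.
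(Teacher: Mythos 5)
Your proof is correct, but it takes a genuinely different route from the paper's. The paper argues about the resampling procedure itself: pivots are drawn i.i.d.\ uniformly and the whole sample is rejected until it is distinct and avoids the minimum; the probability that $n_1\le s$ is then bounded by the expected number of draws landing in $\{2,\dots,s\}$ over all iterations, and the factor $\e^2$ enters as the expected number of iterations (each succeeding with probability at least $\e^{-2}$). You instead work with the law of the accepted sample directly---a uniform $(m-1)$-subset of the ranks $\{2,\dots,n\}$, which is exactly the description the paper already invokes in the proof of Proposition~\ref{p-distrupper}---observe that $n_1\le s$ forces some pivot into a set $R$ of at most $s$ small ranks, and union-bound using the exact inclusion probability $(m-1)/(n-1)$. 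Both arguments rest on the same reduction stated before the propositions (fixing $p_0$ to be the element of rank $1$ and choosing the remaining pivots uniformly among ranks $\{2,\dots,n\}$), so you are on equal footing with the paper on that point. As for what each buys: the paper's computation mirrors the algorithm's actual sampling loop, so it needs nothing about the distribution of the accepted set beyond the per-draw probability; yours is shorter, avoids the expected-number-of-iterations (Wald-style) step entirely, and in fact proves the stronger bound $s(m-1)/(n-1)\le \tfrac{100}{99}\, s/\sqrt{n}$, of which the stated $\e^{2} s/\sqrt{n}$ is a generous weakening.
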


\begin{proof}
Consider the process of generating $n_1,n_2,\dots,n_m$ by selecting each pivot $p_1,p_2,\dots,p_{m-1}$ uniformly and independently at random from $\{1,\dots,n\}$
and re-sampling all of the pivots if they are not all distinct or some of them is the minimal element. 
Eventually, for each $i=1,\dots,m$, set $n_i=p_i-p_{i-1}$, where $p_0=1$ and $p_m=n+1$.
Clearly, if $n_1 \le s$ then at least one of the pivots during the last round was selected from the range $\{2,\dots,s\}$.
We can upper bound the probability of this event by the expected number of pivots selected from that range during any of the iterations.
In expectation there will be at most $\e^2$ iterations as each iteration succeeds with probability at least $1/ \e^2$ (which was observed earlier).
So we will sample at most $\e^2 (m-1)$ pivots in expectation.
A given pivot is sampled from $\{2,\dots,s\}$ with probability at most $\le 2/n$, 
so the expected number of pivots sampled in $\{2,\dots,s\}$ is at most $\e^2 (m-1) s /n \le \e^2 s / \sqrt{n}$.
The claim follows.
\end{proof}

The following claim is the main technical lemma that allows us to deal with expectation over bucket sizes.

\begin{lemma}
    For any integer $n\ge 100$ and $m=\lceil \sqrt{n}\rceil$:
    $$
    \E_{(n_1,\dots,n_m)\sim \mu_n^m}\left[ \sum_{i=1}^m n_i \log n_i \right] \le \frac{1}{2} n \log n + 4\e n.
    $$
\end{lemma}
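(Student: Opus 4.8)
The plan is to bound $\E[\sum_i n_i \log n_i]$ by splitting the sum according to whether $n_i$ is ``large'' or ``small'' relative to its expected value $\approx \sqrt{n}$. By symmetry, the marginal of each $n_i$ is $\mu_n^{m|1}$, so $\E[\sum_{i=1}^m n_i \log n_i] = m \cdot \E_{n_1 \sim \mu_n^{m|1}}[n_1 \log n_1]$. Since $m = \lceil \sqrt n \rceil$, it suffices to show $\E[n_1 \log n_1] \le \frac{1}{2}\sqrt{n}\log n + (\text{small})$, i.e. that $n_1 \log n_1$ is on average not much larger than $\sqrt n \cdot \frac12\log n$. The function $x \mapsto x\log x$ is convex, so one cannot simply apply Jensen in the favorable direction; instead I would control the contribution from the upper tail directly using Proposition~\ref{p-distrupper}, which gives $\Pr[n_1 \ge t\sqrt n] \le \e^{-0.9t + 0.1}$.

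Concretely, I would write $\E[n_1 \log n_1] = \E[n_1 \log n_1 \cdot \mathbf{1}[n_1 \le \sqrt n]] + \E[n_1 \log n_1 \cdot \mathbf{1}[n_1 > \sqrt n]]$. For the first term, on the event $n_1 \le \sqrt n$ we have $\log n_1 \le \frac12 \log n$, so this term is at most $\frac12 \log n \cdot \E[n_1] \le \frac12 \log n \cdot \sqrt n$ (up to the rounding $m = \lceil\sqrt n\rceil$ versus $\sqrt n$, which contributes a lower-order term absorbed into the $4\e n$ slack after multiplying by $m$). For the second term, I would dyadically decompose the range $n_1 \in (\sqrt n, n]$: for $t \ge 1$ bound the contribution of the band $n_1 \in (t\sqrt n, (t+1)\sqrt n]$ by $(t+1)\sqrt n \cdot \log n \cdot \Pr[n_1 > t\sqrt n] \le (t+1)\sqrt n \log n \cdot \e^{-0.9t+0.1}$, and sum over $t = 1, 2, \ldots$. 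The series $\sum_{t\ge 1} (t+1)\e^{-0.9t}$ converges to an absolute constant, so the upper-tail contribution to $\E[n_1\log n_1]$ is at most $C\sqrt n \log n$ for an absolute $C$ — but this is too weak, since multiplying by $m \approx \sqrt n$ gives $Cn\log n$, not $O(n)$.

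To fix this I would instead extract the $\log n$ more carefully in the tail: on the band $n_1 \in (t\sqrt n, (t+1)\sqrt n]$ with $t \le \sqrt n$, write $\log n_1 \le \log((t+1)\sqrt n) = \frac12\log n + \log(t+1)$. The ``$\frac12\log n$'' piece, summed against $(t+1)\sqrt n \cdot \Pr[n_1 > t\sqrt n]$ over $t \ge 1$, contributes $\frac12\log n \cdot \sqrt n \cdot \sum_{t\ge1}(t+1)\e^{-0.9t+0.1}$; combined with the main term this is still $\Theta(\sqrt n \log n)$ and after multiplying by $m$ gives $\Theta(n\log n)$ — so even this does not suffice, which tells me the decomposition threshold must be pushed much lower than $\sqrt n$. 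The right threshold is to split at $n_1 \le \sqrt n$ versus $n_1 > \sqrt n$ but to bound $\log n_1$ in the main term by $\log n_1 \le \log \sqrt n + \log(n_1/\sqrt n)$ is not the issue; rather, the correct statement is that $\E[n_1 \log n_1] \le \sqrt n \log \sqrt n + O(\sqrt n)$, and the $O(\sqrt n)$ term, multiplied by $m = \lceil \sqrt n\rceil$, yields the $O(n)$ slack (the constant $4\e$). So I would bound $\log n_1 \le \frac12 \log n + \log(n_1/\sqrt n)^+$ everywhere (valid since $n_1 \le n$ and the positive part vanishes when $n_1 \le \sqrt n$), giving $\E[n_1\log n_1] \le \frac12\log n \cdot \E[n_1] + \E[n_1 \log^+(n_1/\sqrt n)]$, and the plan is to show the second expectation is $O(\sqrt n)$ — an absolute-constant multiple of $\sqrt n$ with no $\log n$ factor.

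That last bound is where I expect the real work to be, and it is the main obstacle. Using the dyadic decomposition: for $t \ge 1$, $\E[n_1 \log^+(n_1/\sqrt n) \cdot \mathbf 1[t\sqrt n < n_1 \le (t+1)\sqrt n]] \le (t+1)\sqrt n \cdot \log(t+1) \cdot \Pr[n_1 > t\sqrt n] \le \sqrt n \cdot (t+1)\log(t+1)\,\e^{-0.9t+0.1}$. Summing over $t \ge 1$ gives $\sqrt n \cdot \e^{0.1}\sum_{t\ge 1}(t+1)\log(t+1)\e^{-0.9t}$, and the series converges to an absolute constant $C_0$; one then checks numerically that $\e^{0.1} C_0 \le 4\e$ (after also folding in the $m = \lceil\sqrt n\rceil$ rounding slack, which changes $\E[n_1] \le \sqrt n$ arguments by at most a constant factor of $\sqrt n$, and the tiny correction from $\E[n_1] \le \sqrt n$ versus the exact mean). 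Finally multiply through by $m \le \sqrt n + 1$ to obtain $\E[\sum_i n_i\log n_i] \le (\sqrt n + 1)(\frac12\sqrt n\log n + C_0' \sqrt n) \le \frac12 n\log n + 4\e n$ for $n \ge 100$, where the cross term $\sqrt n \cdot \frac12\sqrt n \log n / \sqrt n = \frac12\sqrt n\log n$ and similar lower-order terms are comfortably swallowed by the generous $4\e n$ since $\sqrt n \log n = o(n)$. The one delicate point to get right is making the constants explicit enough to land inside $4\e$; the convergence of $\sum (t+1)\log(t+1)\e^{-0.9t}$ is the crux, and everything else is bookkeeping with the $\lceil\cdot\rceil$ rounding.
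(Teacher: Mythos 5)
Your final plan is correct and is essentially the paper's own argument: both split $\log n_i \le \log\sqrt{n} + \log^{+}(n_i/\sqrt{n})$ (the paper uses $\log\lceil n_i/\sqrt{n}\rceil$), bound the expected excess per bucket by an absolute constant times $\sqrt{n}$ via Proposition~\ref{p-distrupper} and a decomposition into bands of width $\sqrt{n}$ with a convergent series $\sum_t (t+1)\log(t+1)\e^{-0.9t}$, and then sum over the $m$ buckets. The only (harmless) difference is that the paper applies linearity to $\sum_i n_i\log\sqrt{n}=n\log\sqrt{n}$ exactly, avoiding your cross term from bounding $\E[n_1]\le\sqrt{n}$ and $m\le\sqrt{n}+1$ separately, which you correctly absorb into the $4\e n$ slack.
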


\begin{proof}
First, we claim that $\E_{n_1 \sim \mu_n^{m|1} } [ n_1 \log \lceil \frac{n_1}{\sqrt{n}} \rceil] \le 3 \e \sqrt{n}$.
To see this, we use $\Pr[ t\sqrt{n} \le n_1 <(t+1)\sqrt{n} ] \le \e^{-0.9 t+0.1}$ implied by the previous proposition and
we group possible sizes of $n_1$ as follows
\begin{eqnarray*}
\E_{n_1 \sim \mu_n^{m|1} } \left[ n_1 \log \left\lceil \frac{n_1}{\sqrt{n}} \right\rceil \right] &=& \sum_{\ell=1}^{n-1} \Pr[n_1=\ell] \cdot \ell \cdot \log \left\lceil \frac{\ell}{\sqrt{n}} \right\rceil \\
&\le& \sum_{t\ge 0} \Pr\left[ t\sqrt{n} \le n_1 <(t+1)\sqrt{n} \right] \cdot (t+1) \cdot \sqrt{n} \cdot \log (t+1) \\
&\le& \sqrt{n} \cdot \sum_{t\ge 0} \e^{-0.9t+0.1} \cdot (t+1) \cdot  \log (t+1) \\
&\le& \e \sqrt{n} \cdot \sum_{t\ge 0} \e^{-0.9(t+1)} \cdot (t+1)^2 \\
&\le& 3\e\sqrt{n}.
\end{eqnarray*}
Now, using the linearity of expectation
\begin{eqnarray*}
\E_{(n_1,\dots,n_m)\sim \mu_n^m}\left[ \sum_{i=1}^m n_i \log n_i \right] 
&=& \E_{(n_1,\dots,n_m)\sim \mu_n^m}\left[ \sum_{i=1}^m n_i \log \left(n_i \cdot \frac{\sqrt{n}}{\sqrt{n}} \right) \right] \\
&\le& \E_{(n_1,\dots,n_m)\sim \mu_n^m}\left[ \sum_{i=1}^m n_i \log \sqrt{n} + \sum_{i=1}^m n_i \log \left\lceil \frac{n_i}{\sqrt{n}} \right\rceil \right] \\
&=& \E_{(n_1,\dots,n_m)\sim \mu_n^m}\left[ \sum_{i=1}^m n_i \log \sqrt{n} \right] + \sum_{i=1}^m \E_{(n_1,\dots,n_m)\sim \mu_n^m}\left[n_i \log \left\lceil \frac{n_i}{\sqrt{n}} \right\rceil \right] \\
&\le& n \log \sqrt{n} + \sum_{i=1}^m 3\e\sqrt{n} \\
&\le& n \log \sqrt{n} + 3\e n + 3\e\sqrt{n} \le  n \log \sqrt{n} + 4\e n.
\end{eqnarray*}

\end{proof}

We derive from the lemma the following corollary.

\begin{corollary}\label{c-expectation}
    For any integer $n\ge 100$, $m=\lceil \sqrt{n}\rceil$ and $s\ge 2$:
    $$
    \E_{(n_1,\dots,n_m)\sim \mu_n^m}\left[ \sum_{i=1}^m n_i \max(\log_s n_i,1) \right] \le \frac{1}{2} n \log_s n + \frac{4\e n}{\log s} + 2\e^2s^2.
    $$
\end{corollary}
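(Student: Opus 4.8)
The plan is to reduce the corollary to the preceding lemma by comparing, pointwise, the weight $\max(\log_s x,1)$ with $(\log x)/\log s$, and then to absorb the (small) discrepancy between these two weights using Proposition~\ref{p-supper}.

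The elementary observation is that for every integer $a\ge 1$ and every $s\ge 2$ one has $\max(\log_s a,1)=\log_s a$ when $a\ge s$, and $\max(\log_s a,1)=1$ when $a<s$. Since $n_i\ge 1$ forces $n_i\log_s n_i\ge 0$ for every $i$, I would therefore bound, for any vector $(n_1,\dots,n_m)$ in the support of $\mu_n^m$,
$$\sum_{i=1}^m n_i\max(\log_s n_i,1)\ \le\ \frac{1}{\log s}\sum_{i=1}^m n_i\log n_i\ +\ \sum_{i\,:\,n_i<s} n_i .$$
Taking expectations over $\mu_n^m$ and using linearity splits the estimate into two pieces.

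For the first piece I would invoke the preceding lemma, which gives $\E[\sum_i n_i\log n_i]\le \tfrac12 n\log n+4\e n$; dividing by $\log s$ produces precisely the terms $\tfrac12 n\log_s n+\tfrac{4\e n}{\log s}$. For the second piece I would note $\sum_{i\,:\,n_i<s} n_i\le s\cdot|\{i:\ n_i\le s\}|$, so its expectation is at most $s\sum_{i=1}^m\Pr[n_i\le s]=s\,m\,\Pr_{n_1\sim\mu_n^{m|1}}[n_1\le s]$, where I used that all marginals of $\mu_n^m$ coincide with $\mu_n^{m|1}$. Proposition~\ref{p-supper} bounds $\Pr[n_1\le s]$ by $\e^2 s/\sqrt n$, and since $m=\lceil\sqrt n\rceil\le 2\sqrt n$ this piece is at most $s\cdot 2\sqrt n\cdot \e^2 s/\sqrt n=2\e^2 s^2$. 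Adding the two pieces yields the corollary.

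There is no genuine obstacle. The only care needed is in the pointwise inequality above --- checking both cases $a\ge s$ and $a<s$, and observing that extending the first sum to all $i$ only increases it because every term is nonnegative --- and in noting that the hypotheses $n\ge 100$ and $s\ge 2$ are exactly those required by Proposition~\ref{p-supper}.
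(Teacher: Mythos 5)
Your proof is correct and follows essentially the same route as the paper: bound $n_i\max(\log_s n_i,1)$ by $n_i\log_s n_i$ plus a correction of at most $s$ on the event $\{n_i\le s\}$, apply the preceding lemma to the first piece and Proposition~\ref{p-supper} with $m\le 2\sqrt{n}$ to the second. Your write-up just makes the pointwise case distinction more explicit than the paper does.
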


\begin{proof}
\begin{eqnarray*}
\E_{(n_1,\dots,n_m)\sim \mu_n^m}\left[ \sum_{i=1}^m n_i \max(\log_s n_i,1) \right] &\le& \E_{(n_1,\dots,n_m)\sim \mu_n^m}\left[ \sum_{i=1}^m n_i \log_s n_i \right] 
+ \sum_{i=1}^m \Pr_{n_i \sim \mu_n^{m|1}} [n_i \le s] \cdot s \\
&\le& \frac{1}{\log s} \cdot \E_{(n_1,\dots,n_m)\sim \mu_n^m}\left[ \sum_{i=1}^m n_i \log n_i \right] + m\cdot  \frac{\e^2 s}{\sqrt{n}} \cdot s \\
&\le& \frac{1}{2} n \log_s n + \frac{4\e n}{\log s} + 2\e^2 s^2.   
\end{eqnarray*}
Here the second inequality follows from Proposition~\ref{p-supper} and the last one from the previous lemma using the linearity of expectation.
\end{proof}

\subsection{Proof of the main theorem}

We will prove Theorem~\ref{t-IO} by induction on $n$. The base case will be covered by the following lemma.
Recall $\alpha$ from recurrence~(\ref{rec-IO}).

\begin{lemma}\label{lem-basis}
    There are constants $c_{\mathrm{IB}}>1$ and $\gamma \ge \alpha$ such that for any $B>\gamma$, $M\ge B^2$, $1 \le n \le (M/\alpha)^3$:
$$
T(n) \le c_{\mathrm{IB}} \left(\frac{n}{B} + 1\right).
$$
\end{lemma}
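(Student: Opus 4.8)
The plan is to unroll recurrence~(\ref{rec-IO}) a bounded number of times and exploit the fact that the recursion depth is only doubly logarithmic, so that after a constant number of "square-root steps" the instance size drops below the cache threshold $M/\alpha$ and the flat base case $c_1(1+n/B)$ applies. Concretely, for $n \le M/\alpha$ the bound is immediate from the first branch of~(\ref{rec-IO}) (taking $c_{\mathrm{IB}} \ge c_1$), so I would assume $M/\alpha < n \le (M/\alpha)^3$. For such $n$ we have $m = \lceil\sqrt n\rceil$, and I would first observe that all the "column" subproblems have size at most $m \le \sqrt n + 1 \le (M/\alpha)^{3/2}+1$, while each "bucket" subproblem has size $n_i \le n \le (M/\alpha)^3$; the point is that one more square-root step brings the column sizes down to roughly $(M/\alpha)^{3/4}$, and after a second step below $M/\alpha$. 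So I would expand $T(n)$ using~(\ref{rec-IO}) at most twice (for the $mT(m)$ term) and once for the bucket term, and in each case the innermost calls hit instances of size $\le M/\alpha$ where the flat bound holds.

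The key steps, in order, are: (1) handle $n \le M/\alpha$ directly; (2) for $M/\alpha < n \le (M/\alpha)^3$, apply~(\ref{rec-IO}) once, splitting $T(n)$ into the column contribution $m\,T(m)$, the expected bucket contribution $\E_{(n_1,\dots,n_m)\sim\mu_n^m}[\sum_i T(n_i)]$, and the local cost $c_L(1+n/B)$; (3) for the column term, note $m \le (M/\alpha)^{3/2}$, so $m$ is at most $(M/\alpha)^3$ and by a second application of~(\ref{rec-IO}) (now the columns-of-columns have size $\le m^{1/2}+1 \le M/\alpha$ for $B$ large, falling into the base case) we get $T(m) \le O(1 + m/B)$ by a short direct computation, hence $m\,T(m) \le O(m^2/B) = O(n/B)$; (4) for the bucket term, split each $n_i$ according to whether $n_i \le M/\alpha$ or not — for the small ones use the flat bound, and for the large ones recurse once more, but crucially use linearity of expectation together with $\sum_i n_i = n$ (deterministically) to convert $\E[\sum_i (n_i/B)]$ into $n/B$ and $\E[\sum_i \mathbf{1}] = m$ into a $m$ term; if a large bucket $n_i$ itself still exceeds $(M/\alpha)$ we may need to recurse a bounded number of additional times, but since $n_i \le (M/\alpha)^3$ the column-of-bucket sizes are $\le (M/\alpha)^{3/2}$ and their columns are $\le M/\alpha$, so again at most a constant number of unrollings suffice; (5) collect all the $O(n/B)$ and $O(1)$ contributions and choose $c_{\mathrm{IB}}$ large enough (and $\gamma$ large enough, absorbing the $+1$ and low-order $m/B$ terms using $B > \gamma \ge \alpha$, so that e.g. $m/B \le \sqrt n / B$ and $m^2/B \le 2n/B$).

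The main obstacle I anticipate is step~(4): handling the expectation over random bucket sizes when a bucket is itself large enough to require further unrolling. The subtlety is that after one recursive expansion of a large bucket $T(n_i)$ one gets a term $\E[\sum_i (\lceil\sqrt{n_i}\rceil) T(\lceil\sqrt{n_i}\rceil)]$ nested inside the outer expectation, and one must bound this without a clean recursion since $n_i$ is random. I expect the right move is to avoid nesting the $\mu$-distributions and instead bound $T$ on a large bucket by a crude explicit expression of the form $O(n_i/B)\cdot(\text{constant depth})$ — which is legitimate precisely because the number of square-root steps from size $\le (M/\alpha)^3$ down to $\le M/\alpha$ is an absolute constant (at most $2$) — and only then take the expectation, using $\sum_i n_i = n$. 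This keeps everything linear in the (random) sizes, so linearity of expectation finishes the job; the doubly-logarithmic-depth phenomenon warned about in the introduction is not an issue here because in the base-case regime the depth is genuinely $O(1)$.
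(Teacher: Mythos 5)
There is a genuine gap, and it sits exactly where you placed your confidence: the claim that ``the number of square-root steps from size $\le (M/\alpha)^3$ down to $\le M/\alpha$ is an absolute constant (at most $2$)'' is false for the bucket subproblems. Only the \emph{column} subproblems are guaranteed to shrink to $\lceil\sqrt{\cdot}\rceil$; a bucket $n_i$ in recurrence~(\ref{rec-IO}) is a random quantity that can be as large as $n-m+1$, i.e.\ essentially the full instance. Deterministically, a bucket only shrinks by at least $m-1$ elements per level, so unrolling the recurrence a bounded number of times does not bring all subproblems below $M/\alpha$ --- the chain of nested buckets can have depth polynomial in $s=M/\alpha$. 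Consequently your step (3) is already problematic ($T(m)$ for $m$ up to $s^{3/2}$ cannot be bounded by $O(1+m/B)$ by one more unrolling, since its buckets may again exceed $s$), and your step (4) cannot be repaired by a ``crude $O(n_i/B)\cdot(\text{constant depth})$'' bound, because no constant depth is available without a probabilistic argument.

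The paper's proof of this lemma is essentially built around closing exactly this gap. It uses the tail bound on bucket sizes (Proposition~\ref{p-distrupper}) to show that with probability at least $1-9/s^{17}$ every bucket produced in the first level has size $\le 25 s^{3/2}\log s$ and every bucket at the second level has size $\le 135 s^{3/4}\log^{3/2}s \le s$, so that in this \emph{good} event the recursion reaches instances of size $\le s$ within two levels and the total cost is $O(1+n/B)$ (here one sums the flat costs over at most $O(n/\sqrt{s}) = O(n/B)$ small subproblems, much as you propose). The \emph{bad} event is then handled not by further unrolling but by invoking the worst-case bound $T_W(n)\le c_W n^5\log^2 n \le c_W s^{17}$ (this is the reason the paper proves the worst-case time proposition at all): multiplied by the failure probability $9/s^{17}$ it contributes only $O(1)$ in expectation. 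Your proposal contains neither the tail-bound step nor any mechanism for controlling the cost conditioned on a large bucket occurring, so as written it does not yield the lemma.
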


\begin{proof}
 Let $r\ge 25$ be such that $135 \log^{3/2} r \le r^{1/4}$.
 Hence, $\e \le r$ and $r^{3/4}+1 \le \frac{10}{9} r^{3/4}$.
 Let $\gamma = \max(r,\alpha)$.
 Let  $s=M/\alpha$.
 Clearly, $s \ge B^2 / \alpha \ge \gamma^2 / \alpha \ge \gamma \ge r$.
 For $n\le s$, the claim follows from the recurrence~(\ref{rec-IO}).
 For $s < n \le s^3$ we claim that with probability at least $1-1/s^{17}$, 
 the recursion tree of calls to SquareSort has depth at most two before all calls are on arrays of size $\le s$.
 Indeed, the probability that any bucket produced by the first call is of size larger than $25 s^{3/2} \log s \ge 22.5 \lceil s^{3/2} \rceil \log s$
 is at most $s^3 \e^{0.1}/s^{20.25} \le 1/s^{17}$ by Proposition~\ref{p-distrupper}.
 If all the buckets happen to be small then on the second level of the recursion tree 
 we have at most $2\lceil \sqrt{n} \rceil \le 4 s^{3/2}$ nodes each corresponding to sorting an array of size at most $25 s^{3/2} \log s$.
 Each of the nodes generates smaller buckets, altogether at most $8s^3$ buckets.
 Since $\lceil \sqrt{25 s^{3/2} \log s} \rceil \le 6 s^{3/4} \log^{1/2} s$,
 the probability that any of the smaller buckets is of size larger than $135 s^{3/4} \log^{3/2} s \ge  22.5 \log s \cdot 6 s^{3/4} \log^{1/2} s \ge 22.5 \log s \cdot \lceil \sqrt{25 s^{3/2} \log s} \rceil $
 is at most $8\e^{0.1} s^3 / s^{20.25} \le 8/s^{17}$. 
 Given that $s\ge r$, $135 s^{3/4} \log^{3/2} s \le s$.
 So the recursion tree reaches nodes sorting arrays of size $\le s$ in at most two rounds unless a bad event happens with probability $9/s^{17}$.

 The sum of the sizes of nodes on the second level is $2n$ and on the third level it is $4n$ so $\le 7n$ in total including the first layer. 
 (The total size doubles at each level.)
 Each subproblem of size $\ge s$ generates at most $2\lceil \sqrt{s} \rceil \le 4 \sqrt{s}$ smaller subproblems.
 Thus, the number of subproblems of size $<s$ that are generated by a subproblem of size $\ge s$ is at most $\frac{7n}{s} \cdot 4 \sqrt{s} \le 28 \frac{n}{\sqrt{s}} \le  28 n\sqrt{\alpha}/B = O(n/B)$.
 (The last inequality follows from $s \ge B^2/\alpha$.)
 Each subproblem of size $\ell <s$ is processed using at most $O(1+\ell/B)$ IO's.
 So if the bad event does not happen we will perform in total $O(1+n/B)$ IO's.
 With probability $\le 9/s^{17}$ we might get larger buckets on the third level.
 The worst-case number of IOs is at most $c_W n^5 \log^2 n \le c_W s^{17}$, so the expected contribution to IO complexity
 in the case of bad event happening is constant.
 The lemma follows.
\end{proof}

We are ready to prove the upper bound on IO complexity of SquareSort.

\begin{proof}[Proof of Theorem~\ref{t-IO}]
Let $c_{\mathrm{IB}}$ and $\alpha \le \gamma$ be as in Lemma~\ref{lem-basis}.
Set $\beta = \max(2^{8(5+8\e+4\e^2)}, \gamma)$.
Set $c=\max(32 c_{\mathrm{IB}}, 16 c_L)$. 
Assume $B>\beta$,  $M\ge B^2$, $n\ge 1$ are given.
Let $s=M/\alpha$. 
Hence, $B>\gamma>\alpha$ and $s\ge B$.
For $n\le s^3$ the conclusion of the theorem is true by the choice of $c$ and Lemma~\ref{lem-basis}.
So we prove it by induction on $n$ for $n>s^3 \ge 256$. Notice, $n\ge M \ge B^2$ so $\sqrt{n} \le n/B$.

By recurrence~(\ref{rec-IO}):
\begin{eqnarray}
    T(n) \le m \cdot T(m) + \E_{(n_1,\dots,n_m)\sim \mu_n^m}\left[ \sum_{i=1}^m T(n_i) \right] + c_L\left(1+\frac{n}{B}\right).
\end{eqnarray}

First we bound $m \cdot T(m)$. We use the following simple observations:
$m\le 2\sqrt{n}$, $m^2 \le n + 2\sqrt{n}+1$, $\log_s m \le \log_s \sqrt{n} + \log_s 2$, and $\sqrt{n}\log \sqrt{n} \le n/4$.  
By the induction hypothesis:
\begin{eqnarray*}
m\cdot T(m) &\le&  m\cdot \frac{cm}{B} \left(2 \max(\log_{s} m,1) - 1\right) + m\cdot \frac{c}{16} \\
            &\le&  \frac{2cm^2}{B} \cdot \log_s m - \frac{cn}{B} + m\cdot \frac{c}{16} \\
            &\le&  \frac{2c n}{B} \cdot \log_s \sqrt{n} + \frac{2cn}{B} \cdot \log_s 2 \\
            &&+    \frac{4c \sqrt{n}}{B} \cdot \log_s \sqrt{n} + \frac{4c\sqrt{n}}{B} \cdot \log_s 2 \\
            &&+    \log_s \sqrt{n} + \log_s 2 + \frac{c\sqrt{n}}{8}  - \frac{cn}{B} \\
            &\le&  \frac{c n}{B} \cdot \log_s n + \frac{5cn}{B \log s} + \frac{cn}{8B} - \frac{cn}{B}.
\end{eqnarray*}

Now we bound $\E_{(n_1,\dots,n_m)\sim \mu_n^m}\left[ \sum_{i=1}^m T(n_i) \right]$ using induction and Corollary~\ref{c-expectation}:
\begin{eqnarray*}
\E_{(n_1,\dots,n_m)\sim \mu_n^m}\left[ \sum_{i=1}^m T(n_i) \right] 
            &\le&  2\cdot \frac{c}{B} \E_{(n_1,\dots,n_m)\sim \mu_n^m}\left[ \sum_{i=1}^m n_i\cdot \max(\log_{s} n_i,1)\right] - \frac{cn}{B} + \frac{cm}{16}\\
            &\le&  \frac{c}{B} n \log_s n + \frac{2c}{B} \cdot \frac{4\e n}{\log s} + \frac{2c}{B} \cdot 2\e^2s^2  - \frac{cn}{B} + \frac{cn}{8B}.
\end{eqnarray*}

Using $\log s \ge \log B \ge \log \beta \ge 8(5+8\e+4\e^2)$, we get as an upper bound on $T(n)$:
$$
T(n) \le 2\frac{c}{B} n \log_s n + \frac{3}{8}\cdot \frac{cn}{B} - 2 \frac{cn}{B} + c_L\left(\frac{n}{B}+1\right) \le \frac{2c}{B} n \log_s n  -  \frac{cn}{B}
$$
since $n/B>1$ and $\frac{3c}{8} + 2 c_L \le \frac{c}{2}.$ This proves the theorem. 
\end{proof}

\section{Experiments}
To provide a comparison among SquareSort and other sorting algorithms, we compare SquareSort with std::sort and FunnelSort. The first algorithm is a part of the C++ standard library defined in header <algorithm> on g++ and implemented as an IntroSort algorithm. The IntroSort algorithm is a hybrid sort algorithm that combines QuickSort and HeapSort. FunnelSort is another cache-oblivious algorithm; we use its implementation by Frederik Rønn~\cite{ronn} which is also written in C++.

We will compare the time each algorithm takes to sort an array of integers. In each step, we want to sort arrays of the total size of one-third of the memory. The arrays will consist of 32-bit signed integers. Since both std::sort and SquareSort are Las Vegas algorithms, the running time is a random variable. We repeat each test on multiple instances and take the average running time. All tests are run on the Linux operating system, the algorithms are written in C++ and compiled by the g++ compiler. 

We start with the size of $1000$ elements and in each round, we proportionally increase the size of the arrays. We will compare totally four distinct distributions of input elements: a random permutation of numbers in $\{1, \dots, n\}$, a random sequence of binary values, a random sequence of integers from $\{1,\dots, n\}$ selected uniformly at random, and a sequence of integers selected uniformly at random from the range $\{1, \dots, \sqrt{n}\}$. 
We tested the algorithms on an AMD Ryzen 7 1800X Eight-Core Processor 
with three levels of caches with sizes of 96K (L1 per core), 512K (L2 per core) and 16MB (L3 shared) respectively and 32 GB of main memory.
(Measurements on other systems gave similar looking results.)
In the implementation of SquareSort whenever the size of an array is less than 1000 elements we sort it directly using std::sort, also in procedure $\SkewTranspose$, we transpose elements directly whenever that given region has less than 10 columns or the number of buckets is less than 10.

\subsection{Results}
For each size, we measure the average time in nanoseconds. As all three algorithms have the same asymptotic time complexity, we normalize the measured average time $t$ as $t / n \log n$, where $n$ is the size of the sorted array. We plot this normalized time per item as it depends on the number of elements $n$.

For each type of array, std::sort was the fastest, then SquareSort, and last came the FunnelSort. As in the SquareSort, we split the problem into approximately $\sqrt{n}$ problems of size $\sqrt{n}$, this is the reason why we can observe a sudden increase around $10^6$, since here we add another recursive call in expectation. 

        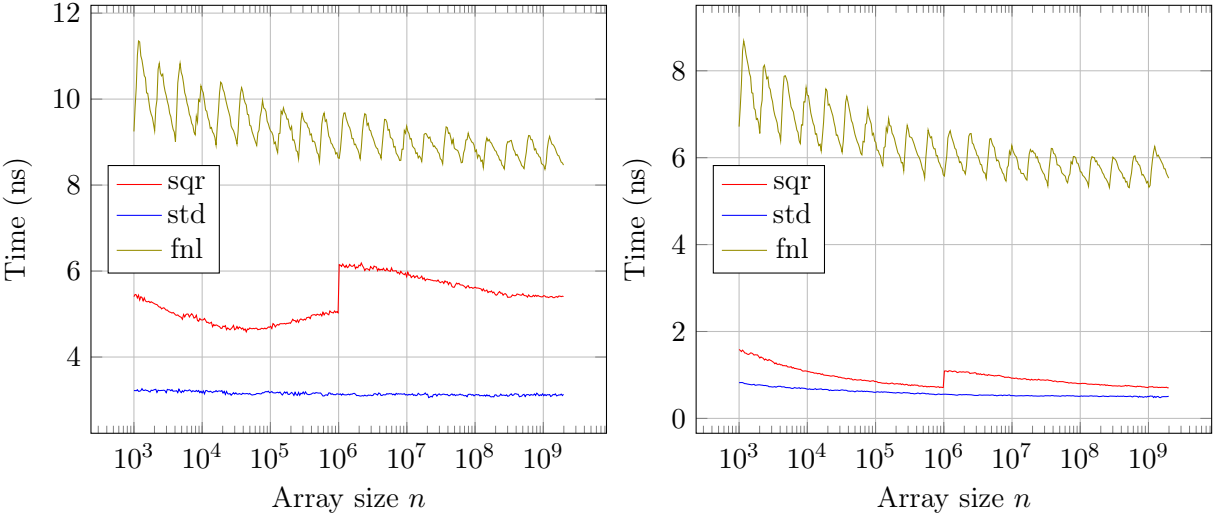
\begin{figure}[H]
            \begin{tikzpicture}
                \begin{axis}[
                    xmode = log,
				    xlabel = Array size $n$,
				    ylabel = Time (ns),
                    grid=major,
                    legend style={at={(0.25,0.5)},anchor=east},
				    ]
				    \addplot [red, mark=none] table [x=size, y=div, col sep=space] 
                    {mereni-13-07-mereni-finalsqare_sort-perm-div.csv};
				    \addplot [blue,  mark=none] table [x=size, y=div, col sep=space] {mereni-13-07-mereni-finalsort-perm-div.csv};
                    \addplot [olive,  mark=none] table [x=size, y=div, col sep=space] {mereni-13-07-mereni-finalfunnel_sort-perm-div.csv};
                    \addlegendentry{sqr};
                    \addlegendentry{std};
                    \addlegendentry{fnl};
                \end{axis}
            \end{tikzpicture}
            \begin{tikzpicture}
                \begin{axis}[
                    xmode = log,
				    xlabel = Array size $n$,
				    ylabel = Time (ns),
                    grid=major,
                    legend style={at={(0.25,0.5)},anchor=east},
				    ]
				    \addplot [red, mark=none] table [x=size, y=div, col sep=space] 
                    {mereni-13-07-mereni-finalsqare_sort-binary-div.csv};
				    \addplot [blue,  mark=none] table [x=size, y=div, col sep=space] {mereni-13-07-mereni-finalsort-binary-div.csv};
                    \addplot [olive,  mark=none] table [x=size, y=div, col sep=space] {mereni-13-07-mereni-finalfunnel_sort-binary-div.csv};
                    \addlegendentry{sqr};
                    \addlegendentry{std};
                    \addlegendentry{fnl};
                \end{axis}
            \end{tikzpicture}
            \caption{Time per item to sort a random permutation (left) and a random binary sequence (right).}
        \end{figure}

        \begin{figure}[H]
            \begin{tikzpicture}
                \begin{axis}[
                    xmode = log,
				    xlabel =  Array size $n$,
				    ylabel = Time (ns),
                    grid=major,
                    legend style={at={(0.25,0.5)},anchor=east},
				    ]
				    \addplot [red, mark=none] table [x=size, y=div, col sep=space] 
                    {mereni-13-07-mereni-finalsqare_sort-sqrt-div.csv};
				    \addplot [blue,  mark=none] table [x=size, y=div, col sep=space] {mereni-13-07-mereni-finalsort-sqrt-div.csv};
                    \addplot [olive,  mark=none] table [x=size, y=div, col sep=space] {mereni-13-07-mereni-finalfunnel_sort-sqrt-div.csv};
                    \addlegendentry{sqr};
                    \addlegendentry{std};
                    \addlegendentry{fnl};
                \end{axis}
            \end{tikzpicture}
            \begin{tikzpicture}
                \begin{axis}[
                    xmode = log,
				    xlabel =  Array size $n$,
				    ylabel = Time (ns),
                    grid=major,
                    legend style={at={(0.25,0.5)},anchor=east},
				    ]
				    \addplot [red, mark=none] table [x=size, y=div, col sep=space] 
                    {mereni-13-07-mereni-finalsqare_sort-full-div.csv};
				    \addplot [blue,  mark=none] table [x=size, y=div, col sep=space] {mereni-13-07-mereni-finalsort-full-div.csv};
                    \addplot [olive,  mark=none] table [x=size, y=div, col sep=space] {mereni-13-07-mereni-finalfunnel_sort-full-div.csv};
                    \addlegendentry{sqr};
                    \addlegendentry{std};
                    \addlegendentry{fnl};
                \end{axis}
            \end{tikzpicture}
            \caption{Time per item to sort a random sequence of elements from the universe of size $n$ (left) and of size $\sqrt{n}$ (right).}
        \end{figure}
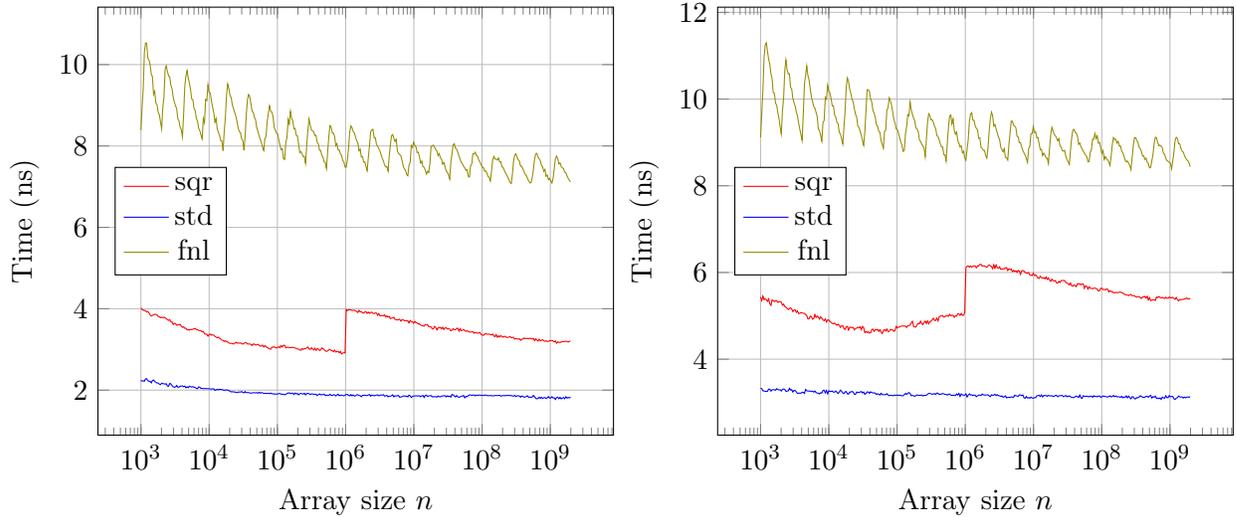

\subsection{Cutoff}
One of the parameters in the square sort algorithm is the size of an array that we sort directly by std::sort at the bottom of the recursion; we will call this parameter {\em cutoff}. We have tested the previous experiment on multiple different cutoffs ranging from 100 to 958. We were interested in how this parameter affects the running time. We present one graph with four cutoffs: 100, 256, 493, and 958. Again we normalize the running time for each size.

The cutoff parameter mainly determines at what size we add an additional recursive call to square sort. At cutoff 100 the additional call happens around $10^4$ items and then next at $10^8$ items. As we increase the cutoff the additional call is added later and for 958 the call is added around one million.

        \begin{figure}[H]
            \begin{tikzpicture}
                \begin{axis}[
                    xmode = log,
				    xlabel =  Array size $n$,
				    ylabel = Time (ns),
                    grid=major,
                    legend style={at={(0.24,0.8)},anchor=east},
				    ]
                    \addplot [olive, mark=none] table [x=size, y=div, col sep=space] 
                    {cutoff-tables-cutoff=100.csv};
                    \addplot [blue, mark=none] table [x=size, y=div, col sep=space] 
                    {cutoff-tables-cutoff=256.csv};
                    \addplot [red, mark=none] table [x=size, y=div, col sep=space] 
                    {cutoff-tables-cutoff=493.csv};
                    \addplot [teal, mark=none] table [x=size, y=div, col sep=space] 
                    {cutoff-tables-cutoff=958.csv};
                    \addlegendentry{100};
                    \addlegendentry{256};
                    \addlegendentry{493};
                    \addlegendentry{958};
                \end{axis}
            \end{tikzpicture}
            \caption{Time per item to sort a random permutation with different cutoffs.}
        \end{figure}
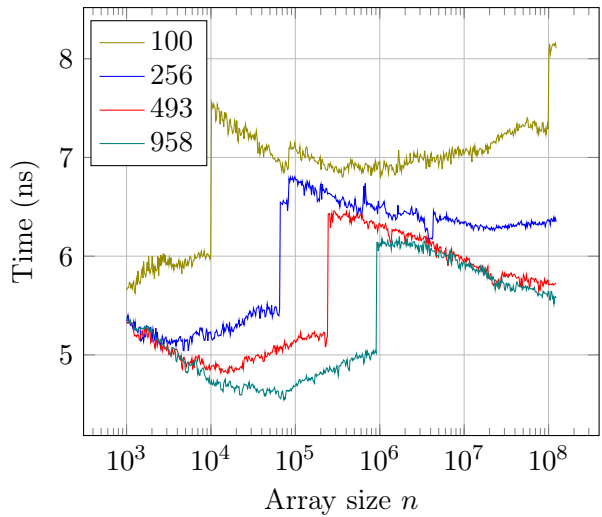

\subsection{External sorting}
In order to test our algorithm in more diverse hierarchy of memories we compare the three algorithms when sorting data stored on a disk.
In this experiment we created a memory mapped file of size $s$ where for $n<5\cdot 10^7$, $s=4GB$, for $5\cdot 10^7 \le n < 4 \cdot 10^9$, $s=64GB$ and
for $n\ge 4 \cdot 10^9$, $s=128GB$.
For each $n$ from $1000$ till $16\cdot 10^9$ we filled the whole file with random $64$-bit integers, subdivided it into blocks of size $n$, and sorted each of the blocks using one of the three algorithms. We took the average running time over the blocks.

The tests were performed on a computer running Linux version 5.10.0-29-amd64 equipped by AMD Ryzen 5 7600 6-Core Processor with 8 GB of RAM and an ssd disk Samsung SSD 970 EVO Plus 1TB using ext4 filesystem.
We plot the resulting measurements next.
The graphs are again normalized, the time represents time spent on sorting one element of the array, the $x$ axis corresponds to array size $n$.

        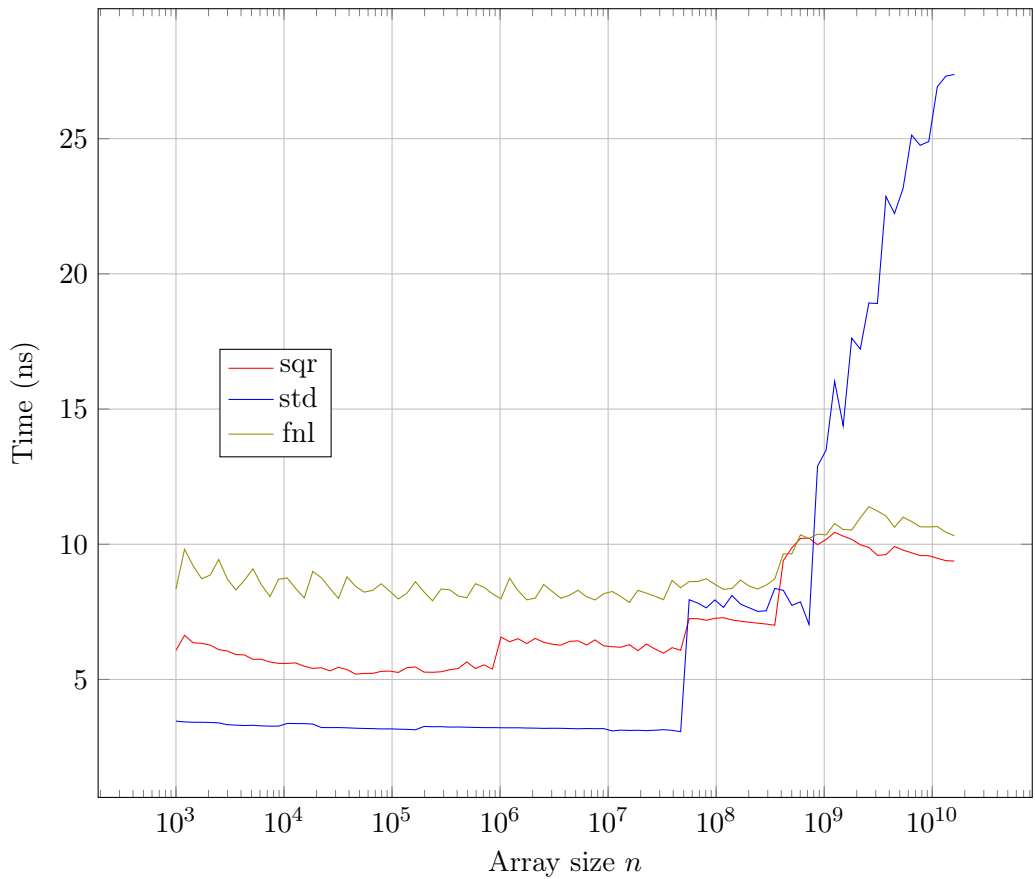
\begin{figure}[H]
            \begin{tikzpicture}
                \begin{axis}[
                    width=14cm,
                    xmode = log,
				    xlabel =  Array size $n$,
				    ylabel = Time (ns),
                    grid=major,
                    legend style={at={(0.25,0.5)},anchor=east},
				    ]
				    \addplot [red, mark=none] table [x=size, y=norm, col sep=space] 
                    {zakami_processed-sq_full_zakami_plus.csv};
				    \addplot [blue,  mark=none] table [x=size, y=norm, col sep=space] {zakami_processed-std_full_zakami_plus.csv};
                    \addplot [olive,  mark=none] table [x=size, y=norm, col sep=space] {zakami_processed-fnl_full_zakami_plus.csv};
                    \addlegendentry{sqr};
                    \addlegendentry{std};
                    \addlegendentry{fnl};
                \end{axis}
            \end{tikzpicture}
            \caption{Comparison of all three algorithms in external sorting experiment.}
        \end{figure}

For $n<5 \cdot 10^7$ the whole mapped file fits in the memory and the relative speed of the three algorithms agrees with our previous measurements: The std::sort is fastest, followed by SquareSort and then by FunnelSort. 
We can see an increase in time taken by the algorithms when the whole file does not fit into memory around size of $n=5 \cdot 10^7$.
Another increase in the time occurs when the sorted array does not fit into memory.
Since neither SquareSort nor FunnelSort are in-place sorting algorithms this happens for them sooner than for std::sort.
However, once std::sort does not fit the sorted array in its memory we can observe sharp increase in its cost starting from about $n=10^9$ elements.

\section*{Acknowledgements}

The measurements were performed on desktop computers at the Computer Science Institute of Charles University. We thank our technical staff for their support with our measurements. 

\section{Appendix}
\lstset{language=C++,
basicstyle=\ttfamily,
keywordstyle=\color{blue}\ttfamily,
stringstyle=\color{red}\ttfamily,
commentstyle=\color{gray}\ttfamily,
morecomment=[l][\color{magenta}]{\#}}
\lstinputlisting[language=C++,
caption={Square sort implementation},
columns=fullflexible,
frame=single,
breaklines=true,
postbreak=\mbox{\textcolor{black}{$\hookrightarrow$}\space}]{mereni-final-mereni-square_sort.h}
\nocite{*}

\bibliographystyle{alpha}
\bibliography{references}

\end{document}